\documentclass[11pt]{article}
\usepackage{chaithanya}
\title{Spectral gap of Lee-Yang Hamiltonians}
\author{Chaithanya Rayudu \thanks{\text{chaithanyarss@unm.edu}}\\
University of New Mexico 
\and
Jun Takahashi \thanks{\text{juntakahashi@issp.u-tokyo.ac.jp}}\\
University of Tokyo
}
\date{}

\begin{document}

\maketitle

\begin{abstract}
The Lee-Yang theorem and its quantum extensions state that, for a broad class of Hamiltonians on any graph, the partition function’s zeros in the complex magnetic field plane lie only on the imaginary axis.
For these Hamiltonians, we prove that under a uniform $Z$-field of any strength $h$, the ground state has a spectral gap of at least $h/4$, independent of the system size and of the coupling strengths. 
The proof uses the zero-freeness of the partition function as given by Asano and Suzuki-Fisher to show exponential decay of the imaginary-time correlations for any product of $Z$-operators.
Our result gives a polynomial-time quantum algorithm for computing the ground state energy of any Lee-Yang Hamiltonian. 
\end{abstract}

\section{Introduction}
\label{sec:introduction}

The spectral gap of a quantum Hamiltonian plays a fundamental role in determining its physical properties, especially at low temperature. 
A many-body system with a constant energy gap shows exponentially decaying spatial correlations in the ground state \cite{hastings2006spectral,nachtergaele2006lieb}, which contrasts sharply with gapless critical systems. 
Gappedness also implies stability of the phase against local perturbations \cite{bravyi2010topological,michalakis2013stability} and leads to the notion of topological phases of matter \cite{hastings2005quasiadiabatic,chen2010local} where the closing of a gap marks a quantum phase transition \cite{sachdev2011quantum,fradkin2013field}. 
While serving an essential role in understanding physical properties, rigorously proving the existence of a gap for a given Hamiltonian in general is notoriously difficult. For example, the Haldane conjecture \cite{haldane1983nonlinear} that the antiferromagnetic Heisenberg chain is gapped for integer spin remains open even after four decades, despite overwhelming numerical evidence \cite{EarlyQMC4Haldane,EarlyDMRG4Haldane,latesttasaki}. 

The spectral gap also plays an important role in quantum computation. Adiabatic quantum computation \cite{KadowakiNishimori,farhietal,LatestQA}, which is a universal model of quantum computation \cite{aharonov2007adiabatic}, works along paths of Hamiltonians where the run time scales with inverse powers of the minimum gap along a path \cite{jansen2007bounds}. An inverse polynomial gap along a path that begins at a Hamiltonian with an easily prepared ground state thus makes the ground state at its end efficiently preparable. The inverse gap also controls the cost of ground-state preparation by phase estimation \cite{lin2020near}, given an initial state with nonnegligible ground-state overlap. In one dimension, a gap further implies an area-law for the ground-state entanglement \cite{hastings2007area,arad2013area} and a polynomial-time classical algorithm for computing the ground state \cite{landau2015polynomial}. In general, however, deciding whether a family of Hamiltonians is gapped is undecidable, even for translation-invariant nearest-neighbor interactions in two dimensions \cite{cubitt2015undecidability}.

In this paper, we prove a uniform lower bound on the field-induced spectral gap for a broad class of Hamiltonians, which we refer to as Lee-Yang Hamiltonians (\cref{def:LY_Hamiltonians}), thus giving an efficient adiabatic quantum algorithm for the corresponding ground state energy problem. 
\begin{theorem*}[main result, \cref{thm:spectral_gap}]
Let $H$ be a Lee-Yang Hamiltonian on $n$ qubits and let $h > 0$. Then the ground state of $H_h = H - h \sum_{i=1}^{n} Z_i$ is nondegenerate, and the spectral gap above it is at least $h/4$.
\end{theorem*}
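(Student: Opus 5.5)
The plan follows the route indicated in the abstract: we control imaginary-time correlations of $Z$-strings through zero-freeness of the partition function, and then read off the gap from their decay rate. Write $Z(\beta,\mathbf{z}) = \mathrm{Tr}\, e^{-\beta (H - \sum_i h_i Z_i)}$ with complex fields $h_i$. The Asano / Suzuki--Fisher property of Lee-Yang Hamiltonians (\cref{def:LY_Hamiltonians}) states that $Z \neq 0$ whenever $\mathrm{Re}\, h_i > 0$ for all $i$. This is the only analytic input.

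\textbf{Step 1: a zero-free polydisc and a derivative bound.} Fix $h>0$. Perturb a single field as $h_i \mapsto h + w$ and extend this to time-dependent perturbations via the Dyson expansion. Perturbing by $w Z_i$ only during the imaginary-time window $[t, t+s]$ still yields a partition function that is nonvanishing for $\mathrm{Re}(h+w) > 0$. To see this, discretize the time window with Trotter steps: each Trotter slice is a Lee-Yang-type object with some positive field, and the Asano contraction argument should handle slice-dependent fields. Next, consider the log-partition function $\log Z$ as a function of two such localized perturbations, one at time $0$ on a string $Z_A$ and one at time $\tau$. It is analytic on the polydisc $|w_1|,|w_2| < h$. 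A Cauchy estimate on $\log Z$ then bounds the mixed second derivative, which is the connected imaginary-time correlator $\langle Z_A(\tau) Z_A(0)\rangle - \langle Z_A\rangle^2$. Here $Z_A$ is generated by multiplying Lee-Yang variables, i.e., by using several field variables on the same slice.

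\textbf{Step 2: obtaining exponential decay.} A plain Cauchy estimate gives only a bounded correlator, not decay. To obtain decay, I would use a complex rotation of the Trotterized fields: shift the field on the whole interval $(0,\tau)$ to $h - \delta$ with $\delta < h$, and compare with the original system. Concretely, I would show that $f(\tau) = \langle Z_A(\tau) Z_A(0)\rangle_c$ extends analytically to the strip, and that the constant field $h$ can be traded for an exponential weight $e^{-c h \tau}$ with $c = 1/4$. The constant $1/4$ arises from splitting the available half-plane margin $h$ among the two insertion points and the interval, each of which needs a disc of radius at most $h/2$ on both sides. I expect this step to be the main obstacle: turning zero-freeness of a half-plane into a quantitative decay rate. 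The first attempt will be to write $\log Z$ as a function of a field $u$ that is uniform on $[0,\tau]$, note that analyticity holds for $\mathrm{Re}\, u > -h$, and apply a Hadamard/Borel--Carathéodory bound to the $\tau$-dependence.

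\textbf{Step 3: from correlators to the gap.} Take $\beta\to\infty$. Then $\langle Z_A(\tau)Z_A(0)\rangle_c = \sum_{k\ge1} |\langle k|Z_A|0\rangle|^2 e^{-\tau(E_k-E_0)}$ when the ground state is unique, and a degenerate ground space gives a nondecaying term. Decay at rate $h/4$ for every $Z$-string therefore forces every excited state $|k\rangle$ with $E_k - E_0 < h/4$ to satisfy $\langle k|Z_A|0\rangle = 0$ for all $A$. Nondegeneracy comes from Perron--Frobenius: Lee-Yang Hamiltonians should be stoquastic up to a local basis change, and the field term makes the matrix irreducible together with the off-diagonal terms. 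Failing that, I would use the degeneracy argument above applied to the constant string $A=\emptyset$. Finally, one needs the $Z$-strings acting on $|0\rangle$ to span the full space. This holds because $|0\rangle$ has full support in the computational basis by Perron--Frobenius positivity, so the diagonal algebra generated by $Z_A$ acts cyclically on $|0\rangle$. It follows that no excited state lies within $h/4$ of $E_0$.
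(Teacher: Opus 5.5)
\textbf{Gap 1: Step 2 has no mechanism that produces decay.} Step 2 is where the theorem is actually proved, and as written nothing in it forces the correlator to decay. Analyticity of $\log Z$ in a field parameter on a half-plane, plus a modulus bound (Cauchy or Borel--Carath\'eodory), only bounds the correlator by a constant. Shifting the field on $(0,\tau)$ does not change this unless you also know a point where the correlator vanishes to an order growing with $\tau$. The ``splitting of the margin'' behind $c=1/4$ is not a derivation.

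The missing idea is to expand around zero fugacity, i.e.\ infinite field. The paper proceeds as follows.
\begin{itemize}
\item It keeps an independent variable $y_{i,m}$ for every site and every Trotter layer in $\ZZ_\epsilon(\bm y)$.
\item It inserts $Z_S$ at layer $m$ as the sign flip $y_{i,m}\mapsto -y_{i,m}$, $i\in S$, which stays inside the zero-free polydisc. This also repairs your Step 1: a coupling $wZ_A$ with $|A|\ge 2$ is not a Lee--Yang field, so Asano and Suzuki--Fisher say nothing about its zeros.
\item The correlator becomes $A(e^{u_k}-1)$ with $|A|\le 1$, where $u_k$ is a combination of $\log\ZZ_\epsilon$ at four reflected points.
\item Setting all variables of a layer to zero inserts the rank-one projector $|\bm 0\rangle\langle\bm 0|$, which factorizes the trace. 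Hence every Taylor monomial of $\log\ZZ_\epsilon$ touching layers $0$ and $k$ must touch every layer of one arc between them, and has degree at least $d_k=\min(k,M-k)+1$ (\cref{lem:taylor_support}).
\item Along the ray $\lambda\bm y_\star$ with $y_\star=e^{-2\epsilon h}$, scaling the variables on both arcs, $u_k=g(1)$. Here $g$ vanishes to order at least $d_k$ at $\lambda=0$ and is analytic for $|\lambda|<e^{2\epsilon h}$.
\item The maximum modulus principle on $|\lambda|=e^{\epsilon h}$ gives the factor $e^{-\epsilon h d_k}\approx e^{-h\tau}$, times a Borel--Carath\'eodory bound $O(C_\star/(\epsilon h))$.
\item That prefactor blows up as $\epsilon\to0$, so one also needs the Trotter comparison (\cref{lem:trotter_compare}) and the adaptive step $\epsilon\sim e^{-h\tau/2}/G$. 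This is what yields the rate $h/2$ in $\min(t,\beta-t)$.
\end{itemize}

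\textbf{Gap 2: Step 3 relies on Perron--Frobenius, which is unavailable.}
\begin{itemize}
\item The paper notes that the Lee--Yang class contains Hamiltonians, such as the phase-shifted EPR family, with no known sign-curing basis change.
\item The field $-h\sum_i Z_i$ is diagonal and contributes nothing to irreducibility.
\item For the classical ferromagnet ($w^x=w^y=w^{xy}=w^{yx}=0$), or for $H=0$, the ground state is $|0\cdots0\rangle$. So the diagonal algebra does not act cyclically on it, and the full-support claim fails.
\item The fallback $A=\emptyset$ is vacuous, since $Z_\emptyset=I$ has identically zero connected correlator.
\end{itemize}
Without positivity, $\langle k|Z_A|0\rangle=0$ for all $A$ only says that $|k\rangle$ and $|0\rangle$ have disjoint supports in the computational basis, which is no contradiction.

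The paper instead applies the Lehmann representation at $t=\beta/2$ (\cref{lem:lehmann}) to all pairs $l,l'$ in the span $V$ of eigenstates with $E_l\le E_0+\delta$, $\delta<h/4$, \emph{including} the pairs $l=l'$. Combined with the decay bound, this gives $\langle a|Z_S|b\rangle=0$ and $\langle a|Z_S|a\rangle=\langle b|Z_S|b\rangle$ for orthonormal $a,b\in V$. Taking $D=|x\rangle\langle x|$ then yields a contradiction with no positivity, proving nondegeneracy and the gap together.

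Working at $t=\beta/2$ also avoids your limit $\beta\to\infty$ at fixed $\tau$, which would require a $\beta$-uniform prefactor. The paper's prefactor $K$ grows like $\beta^2$ and is beaten by $e^{-h\beta/4}$.
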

\noindent
The Hamiltonian family we consider acts on $n$ qubits with 2-local terms: 
\begin{definition*}[Lee-Yang Hamiltonians, \cref{def:LY_Hamiltonians}]
    \begin{align}
        H = \sum_{ij} H_{ij}
        \quad \mathrm{where}\quad H_{ij} = -w^z_{ij} Z_i Z_j + w^x_{ij} X_i X_j + w^y_{ij}Y_i Y_j + w^{xy}_{ij} X_i Y_j + w^{yx}_{ij} Y_i X_j \nonumber \quad \mathrm{and}\quad \\
    \forall ij, ~
        w^z_{ij} \;\geq\; \frac{1}{2} \Big[ \big( w^x_{ij} - w^y_{ij} \big)^2 + \big( w^{xy}_{ij} + w^{yx}_{ij} \big)^2 \Big]^{\frac{1}{2}} + \frac{1}{2} \Big[ \big( w^x_{ij} + w^y_{ij} \big)^2 + \big( w^{xy}_{ij} - w^{yx}_{ij} \big)^2 \Big]^{\frac{1}{2}}. \nonumber
    \end{align}
\end{definition*}

\subsection{Algorithmic implications}

Our result answers an open question that has attracted considerable attention in recent years about the complexity of the bipartite quantum Max-Cut problem \cite{gharibian2023guest,king2023improved,ju2025improved,apte2025improved,apte2025approximation,wong2026lee,marwaha2026complexity}. 
We make progress on this question by placing the bipartite quantum Max-Cut problem and the more general ground-state energy problem of the EPR Hamiltonian \cite{king2023improved} in {\sf BQP}, improving upon the previous {\sf StoqMA} upper bound \cite{cubitt2016complexity}. 
The quantum algorithm is simple:
Starting from the ground state of $-\sum_i Z_i$ and evolving adiabatically along the path $sH-(1-s)\sum_i Z_i$, the ground state of $H-h\sum_i Z_i$ can be prepared efficiently, to arbitrary inverse-polynomial precision, for every $h\geq 1/\operatorname{poly}(n)$. Thus, adiabatic quantum computation efficiently solves the field-perturbed problem throughout this regime. By choosing $h$ sufficiently small, we obtain an inverse-polynomial additive approximation to $E_0(H)$, placing the EPR Hamiltonian ground-state energy problem in ${\sf BQP}$.
We remark that the EPR Hamiltonian allows sign-problem free quantum Monte Carlo simulations, which leaves reasonable hope for containment in {\sf BPP} \cite{marwaha2026complexity,takahashi2024,rayudu2025fast,sandvik2010computational}. 
By contrast, the Lee-Yang class also contains Hamiltonians for which there is no known way to remove the sign problem. 
One such family, the phase-shifted EPR Hamiltonians, was introduced in \cite{wong2026lee} as a candidate for quantum advantage in ground-state energy estimation. Our spectral gap result implies that it is indeed possible to efficiently estimate the ground state energies of these Hamiltonians using a quantum computer, while currently there are no known candidate classical algorithms for this problem.

\subsection{Related works}

One of the earliest works closest to ours in technical spirit was by Penrose and Lebowitz \cite{penrose1974exponential}, who proved that zero-freeness for classical lattice systems implies the decay of correlations.
More precisely, for systems with finite-range interactions, a zero-free region of the complex fugacity plane containing the origin implies a gap in the spectrum of the transfer matrix that persists in the thermodynamic limit. Our results are a quantum analogue of this statement, with the transfer direction replaced by imaginary time and the transfer matrix by the Gibbs operator of the Hamiltonian in the field. For translation-invariant classical systems, Dobrushin and Shlosman \cite{dobrushin1987completely} later proved the equivalence of the decay of correlations with the zero-freeness of the partition function, regardless of the volume and the boundary conditions. Closer to our work, Fr{\"o}hlich and Rodriguez \cite{frohlich2012some,frohlich2017cluster} proved the analyticity of the connected correlations of classical ferromagnets throughout the zero-free region of the field given by the Lee-Yang property, and derived from it their exponential clustering at nonzero field. They remarked that similar results can be proven for the Duhamel correlations of quantum lattice systems. For quantum Heisenberg models in site-dependent fields bounded away from zero, Bj{\"o}rnberg and Ueltschi \cite{bjornberg2015decay} proved exponential decay of the correlations transverse to the field, in space and in imaginary time, at every temperature. Their proof uses random-loop representations \cite{ueltschi2013random}, though they note that some of these results can also be obtained from the Lee-Yang theorem, and the observables they cover are complementary to the diagonal products of \cref{thm:Z_decay}. 
Note that our result on the correlation decay does not only change the observables, but also proves the statement for {\it all} diagonal operators. 

More broadly for quantum many-body systems, cluster expansions and the zeros of the partition function have been used to prove various results. Above a threshold temperature, where the expansions converge for arbitrary local Hamiltonians, cluster expansions establish the decay of correlations of Gibbs states and the locality of temperature \cite{kliesch2014locality}, and efficient approximation algorithms for quantum partition functions \cite{mann2021efficient}. In recent years, they have also given the separability and efficient preparation of high-temperature Gibbs states \cite{bakshi2024high}, spectral gaps and fast mixing of Lindbladian Gibbs samplers beyond geometric locality \cite{bergamaschi2026fast}, and classical algorithms for thermal expectation values of the SYK model \cite{zlokapa2026rigorous,zlokapa2026syk}. Harrow, Mehraban, and Soleimanifar \cite{harrow2020classical} connected the complex zeros of quantum partition functions to the decay of correlations and to approximation algorithms, asking whether quantum many-body systems that satisfy the Lee-Yang theorem are computationally easy. In particular, they obtained a quasipolynomial-time classical algorithm, valid at every temperature, for the partition function of anisotropic Heisenberg ferromagnets in a field, a strict subset of the Lee-Yang Hamiltonians.

More recently, Wong, Bravyi, Gosset, and Liu \cite{wong2026lee} investigated the connection between the Lee-Yang tensors and Hamiltonian complexity. They study quantum states and operators that are Lee-Yang tensors of a given radius, tensors whose entries in the computational basis generate a polynomial, in one complex variable for each tensor index, that is zero-free on the polydisc of that radius. They review the closure properties of this class, show that states of radius above one can be prepared by circuits of quasipolynomial size, and prove an analogue of the Perron-Frobenius theorem, stating that a Hermitian operator of radius above one has a nondegenerate dominant eigenvalue. They also conjecture, with numerical support, that the spectral gap of the EPR-like Hamiltonians which would provide an efficient adiabatic algorithm for the bipartite quantum Max-Cut problem with uniform weights.

\subsection{Proof overview}

The starting point of our proof is the Lee-Yang circle theorem as stated in \cref{sec:LY_Hamiltonians}. The circle theorem we use applies to the Trotterized version $\ZZ_\epsilon$ (\cref{lem:zero_free}), which says that, as a multivariate polynomials in one complex variable for each site in each Trotter layer, $\ZZ_\epsilon$ is zero-free on the open unit polydisc, so $\log \ZZ_\epsilon$ is analytic in the same region. 
Note that the circle theorem we use is the one in its fullest generality without dropping the variables from the middle layers of the Trotterization. In our case, it is essential that we keep all the variables from all the layers as that is what enables us to prove the exponential decay. It is quite common to state the theorem where the the variables from the middle layers are contracted away since this allows one to take the continuum limit $\epsilon \rightarrow 0$ and state the theorem without any Trotterization. We do not know if this version of the theorem is sufficient to prove the spectral gap. The following proof overview is in the reverse order of how it is in the paper as it will helps motivate the starting direction we take.

A key step in our proof is to prove the decay of correlations in the imaginary-time direction (\cref{thm:Z_decay}) which then implies the spectral gap (\cref{thm:spectral_gap}). For every nonempty $S \subseteq [n]$, the connected correlator of $Z_S = \prod_{i \in S} Z_i$ between the imaginary times $0$ and $t$ decays as $\exp\{-\frac{h}{2} \min(t,\, \beta - t)\}$ for every inverse temperature $\beta$, with a prefactor polynomial in $n$, $\beta$, the total coupling strength, and $1/h$. An essential feature of this decay is that the rate is same for all $S$. Since the products $Z_S$ span the algebra of diagonal operators, the exponential decay applies to every diagonal operator. Taking $\beta \rightarrow \infty$, this implies the spectral gap.

To prove the decay of exact correlations, we first prove the decay of Trotterized correlations. In the Trotterized partition function $\ZZ_\epsilon$ as a polynomial, inserting $Z_S$ at a Trotter layer amounts to flipping the signs of the layer variables on $S$. The Trotterized correlator is then dependent on the values of $\log \ZZ_\epsilon$ at four reflected points of the polydisc (\cref{lem:log_derivative,lem:correlator_reduction}). A key ingredient in our proof to show decay of the Trotterized correlator is a support lemma for the Taylor coefficients of $\log \ZZ_\epsilon$ (\cref{lem:taylor_support}) that says every monomial with a non-zero Taylor coefficient must form a connected cluster in the imaginary-time direction, and therefore monomials containing variables from far apart layers must have a high degree. Combining the analyticity of $\log \ZZ_\epsilon$ and the support lemma (\cref{lem:taylor_support}) gives an exponential decay of the Trotterized correlator in the number of layers between the insertions (\cref{prop:fixed_eps}). The decay rate is the same for all $S$. The bound obtained this way (\cref{prop:fixed_eps}) carries a prefactor of $(\epsilon h)^{-1}$, so it does not survive the continuum limit $\epsilon \to 0$. Instead, we need two additional steps to convert the exponential decay from the Trotterized correlator to the exact one. The first is an error bound between the Trotterized and the exact correlation function (\cref{lem:trotter_compare}), and the second is an adaptive choice of the Trotter step size that shrinks exponentially in the separation in the correlator (\cref{thm:Z_decay}).

\subsection{Organization}

The remainder of the paper is organized as follows. \cref{sec:LY_Hamiltonians} defines the Lee-Yang Hamiltonians and states the zero-freeness theorem and some relevant bounds of the Trotterized partition function. \cref{sec:higher_order} proves the decay of correlations. \cref{sec:spectral_gap} proves the spectral gap from the decay of correlations. \cref{app:trotter} contains the elementary Trotter bounds and \cref{app:trotter_comparison} contains the Trotter error bounds for the correlators. 

\section{Lee-Yang Hamiltonians}
\label{sec:LY_Hamiltonians}

Here we recall the definition of Lee-Yang Hamiltonians, the class of Heisenberg-type quantum Hamiltonians for which the Lee-Yang circle theorem holds \cite{asano1970theorems,suzuki1971zeros}, the statement of the circle theorem and some bounds on the size of the partition function for these Hamiltonians. For simplicity, we only consider the Hamiltonians without any a priori 1-local fields.
\begin{definition}[Lee-Yang Hamiltonians]
    \label{def:LY_Hamiltonians}
    Consider the following class of Heisenberg-type Hamiltonians over an $n$-qubit system
    \begin{align}
        \label{eq:LY_Hamiltonians}
        H = \sum_{ij} H_{ij}
        \quad \text{where}\quad H_{ij} = -w^z_{ij} Z_i Z_j + w^x_{ij} X_i X_j + w^y_{ij}Y_i Y_j + w^{xy}_{ij} X_i Y_j + w^{yx}_{ij} Y_i X_j.
    \end{align}
    We define the Lee-Yang Hamiltonians as the subclass that satisfies, for all $i, j \in [n]$,
    \begin{align}
        \label{eq:LY_condition}
        w^z_{ij} \;\geq\; \frac{1}{2} \Big[ \big( w^x_{ij} - w^y_{ij} \big)^2 + \big( w^{xy}_{ij} + w^{yx}_{ij} \big)^2 \Big]^{\frac{1}{2}} + \frac{1}{2} \Big[ \big( w^x_{ij} + w^y_{ij} \big)^2 + \big( w^{xy}_{ij} - w^{yx}_{ij} \big)^2 \Big]^{\frac{1}{2}}.
    \end{align}
\end{definition}
\noindent Without the cross couplings, $w^{xy}_{ij} = w^{yx}_{ij} = 0$, the condition \cref{eq:LY_condition} reduces to $w^z_{ij} \geq \max\big( |w^x_{ij}|,\, |w^y_{ij}| \big)$.

\begin{lemma}[\cite{suzuki1971zeros}]
    \label{lem:zero_free}
    For Lee-Yang Hamiltonians, the Trotterized partition function 
    \begin{align}
        \ZZ_\epsilon (\bm{y}) = \ZZ_\epsilon\left(\{y_{i,m}\}_{i \in \mathbb{Z}_n, m \in \mathbb{Z}_M}\right) 
        &\coloneqq \mathrm{Tr}\left[\prod_{m \in \mathbb{Z}_M}  L \left(\prod_{i=1}^n \left(\ketbra{0}{0}_i + y_{i,m} \ketbra{1}{1}_i\right)\right)L^{\dagger}\right] \label{eq:Trotter_Z_expansion_1}\\[6pt]
        &= \sum_{\bm{x}_0, \ldots, \bm{x}_{M-1}} \; \prod_{m \in \mathbb{Z}_M} \Big[ \langle \bm{x}_m |\, L^\dagger L\, | \bm{x}_{m+1} \rangle \prod_{j} y_{j,m}^{(\bm{x}_m)_j} \Big], \label{eq:Trotter_Z_expansion_2}
    \end{align}
    where $\epsilon = \beta/M$ and 
    \begin{align}
    \label{eq:trotter_operator}
        L \coloneqq \prod_{ij} e^{-\frac{\epsilon}{2} H_{ij}},
    \end{align}
    has no zeros, i.e. $\ZZ_\epsilon (\bm{y}) \neq 0$, when $|y_{i,m}| < 1$ for all $i \in \mathbb{Z}_n, m \in \mathbb{Z}_M$. Here, $|\bm{x}_m\rangle$ is the $Z$-basis state inserted at the $m$-th Trotter slice.
\end{lemma}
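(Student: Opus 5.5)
The plan is the Asano contraction method. We build $\ZZ_\epsilon(\bm{y})$ from elementary two-site and one-site tensors, each multi-affine and zero-free on the open unit polydisc in its own variables. We then glue these tensors by Asano contraction, which preserves zero-freeness on the polydisc. The starting point is the polynomial form \cref{eq:Trotter_Z_expansion_2}.

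Since $L=\prod_{ij} e^{-\frac{\epsilon}{2}H_{ij}}$, the product $L^\dagger L$ is a product of two-qubit factors $e^{-\frac{\epsilon}{2}H_{ij}}$. We introduce fresh $Z$-basis resolutions of the identity between every pair of consecutive two-qubit factors. This writes $\ZZ_\epsilon$ as a sum over all intermediate basis states of products of matrix elements $\langle a b| e^{-\frac{\epsilon}{2}H_{ij}} |a'b'\rangle$.

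We attach an auxiliary complex variable to each intermediate index, so that each elementary factor becomes a multi-affine polynomial
\[
P_{ij}(u_i,u_j,u_i',u_j')=\sum_{a,b,a',b'} \langle ab|e^{-\frac{\epsilon}{2}H_{ij}}|a'b'\rangle\, u_i^{a}u_j^{b}u_i'^{a'}u_j'^{b'}.
\]
The original variables $y_{i,m}$ enter as one-variable factors $1+y\,\cdot$ on the diagonal. The full $\ZZ_\epsilon$ is then recovered from the product of these polynomials by repeatedly identifying two variables $u,v$ that refer to the same qubit index. Identification is the Asano contraction $\alpha+\beta u+\gamma v+\delta uv\mapsto \alpha+\delta w$. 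The cyclic trace identifies the last layer with the first.

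Asano's lemma states that if $\alpha+\beta u+\gamma v+\delta uv$ is nonzero for $|u|,|v|<1$, then $\alpha+\delta w$ is nonzero for $|w|<1$. This holds even with the other variables present and ranging over the polydisc. Given the lemma, it suffices to show that each $P_{ij}$ is zero-free on the open unit polydisc in its four variables; the one-site factors $1+y$ are trivially zero-free there.

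The main obstacle is this local statement. By $U(1)$ symmetry considerations, $H_{ij}$ maps the span of $\{|00\rangle,|11\rangle\}$ and the span of $\{|01\rangle,|10\rangle\}$ to themselves. The $XX,YY,XY,YX$ terms give the off-diagonal entries $w^x-w^y\mp i(w^{xy}+w^{yx})$ in the first block and $w^x+w^y\pm i(w^{xy}-w^{yx})$ in the second block, up to conventions. The $-w^zZZ$ term gives diagonal $-w^z$ on the first block and $+w^z$ on the second.

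Exponentiating each $2\times2$ block explicitly, the Suzuki–Fisher criterion for zero-freeness of the resulting four-variable polynomial is that the diagonal entries dominate the off-diagonal ones in the appropriate sense. This reduces to $w^z$ being at least half the sum of the two block off-diagonal moduli, which is exactly \cref{eq:LY_condition}.

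To carry this out, I would first handle small $\epsilon$ by expanding to first order and checking the two-site Asano/Suzuki–Fisher inequality. I would then use the fact that products and contractions preserve the property, so that $e^{-\frac{\epsilon}{2}H_{ij}}=\lim_k (1-\frac{\epsilon}{2k}H_{ij})^k$ inherits it. Hurwitz's theorem passes from the approximants to the limit, since the limit is not identically zero: its value at $0$ is a positive diagonal entry. Zero-freeness on the open polydisc survives the limit.

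Assembling the pieces then gives $\ZZ_\epsilon(\bm{y})\neq0$ whenever all $|y_{i,m}|<1$.
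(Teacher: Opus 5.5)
Your architecture is the standard Asano--Suzuki--Fisher argument that the paper relies on by citation; the paper gives no proof of its own. You build gate tensors that are zero-free on the open unit polydisc and glue them with Asano contractions. As written, though, two steps are missing.

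\textbf{Summed indices are evaluated on the boundary.} Asano's lemma turns $\alpha+\beta u+\gamma v+\delta uv$ into $\alpha+\delta w$, which is zero-free only for $|w|<1$. Summing over an index, however, means setting $w=1$.
\begin{itemize}
\item Only the indices at the $D_m$ insertions carry a variable. There you can simply set the contracted variable to $y_{i,m}$, so no separate $1+y\,\cdot$ factor is needed.
\item Every index between consecutive gates inside $L^\dagger L$, and every internal index of your products $(1-\tfrac{\epsilon}{2k}H_{ij})^k$, must be evaluated at $w=1$. Asano's lemma says nothing there (think of $1-w$).
\end{itemize}
So ``products and contractions preserve the property'' is not yet justified. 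The repair is as follows. After all contractions you have $Q(\bm y,\bm w)$, zero-free on the full open polydisc, with $\ZZ_\epsilon(\bm y)=Q(\bm y,\bm 1)$. Since $Q(\cdot,r\bm 1)\to\ZZ_\epsilon$ locally uniformly as $r\uparrow1$, Hurwitz's theorem in several variables makes $\ZZ_\epsilon$ either zero-free on the open polydisc or identically zero. The value $\ZZ_\epsilon(\bm 0)=\|L\ket{\bm 0}\|^{2M}>0$ excludes the second case. The same argument handles the gate products, using $\langle 00|(1-tH_{ij})^k|00\rangle>0$.

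\textbf{The local statement is only asserted.} This is the one place \cref{eq:LY_condition} enters, and an appeal to ``the Suzuki--Fisher criterion in the appropriate sense'' does not establish it. It can be checked directly on the exponential for every $\epsilon$, which makes your first-order expansion and $k\to\infty$ limit unnecessary. (Also, the block structure comes from the parity $Z_iZ_j$, not from $U(1)$, which the $|00\rangle\leftrightarrow|11\rangle$ entries break.)

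Let $s=\epsilon/2$, and let $a,b$ be the off-diagonal entries of the two blocks, so that \cref{eq:LY_condition} reads $2w^z\ge|a|+|b|$. The gate polynomial is $P=D+v_2N$ with
\[
D=A+\alpha u_1u_2+Bu_1v_1+\bar\gamma u_2v_1,\qquad N=Au_1u_2v_1+\bar\alpha v_1+Bu_2+\gamma u_1,
\]
where
\[
A=e^{sw^z}\cosh(s|a|),\qquad |\alpha|=e^{sw^z}\sinh(s|a|),\qquad B=e^{-sw^z}\cosh(s|b|),\qquad |\gamma|=e^{-sw^z}\sinh(s|b|).
\]
The proof then runs in three steps.
\begin{itemize}
\item Since $A,B$ are real and the off-diagonal entries come in conjugate pairs, $N=u_1u_2v_1\overline{D}$ on the torus $|u_1|=|u_2|=|v_1|=1$. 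Hence $|N|=|D|$ there.
\item On the closed polydisc, $|D|\ge A-|\alpha|-B-|\gamma|=e^{s(w^z-|a|)}-e^{s(|b|-w^z)}$. This is nonnegative exactly under \cref{eq:LY_condition}.
\item If this bound is positive, the maximum modulus principle applied to $N/D$ gives $|N|\le|D|$ on the polydisc. Hence $|P|\ge(1-|v_2|)|D|>0$ for $|v_2|<1$.
\end{itemize}
The equality case follows by replacing $A$ with $A+\rho$, which preserves the torus identity, and letting $\rho\downarrow0$ with Hurwitz.
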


Throughout the paper, we study the Lee-Yang Hamiltonians in a uniform $Z$-field of strength $h > 0$,
\begin{align}
\label{eq:LY_with_field_Hamiltonian}
H_h := H - h \sum_i Z_i.
\end{align}
Define
\begin{align}
\label{eq:Gamma_def}
    \WW := \sum_{ij} w^z_{ij} \quad \text{and} \quad \Gamma := 3\, \WW + n h,
\end{align}
where $\Gamma$ bounds the operator norm of $H_h$ since $\|H_{ij}\| \le 3\, w^z_{ij}$ and $\|H_h\| \le 3\, \WW + nh = \Gamma$.
The size of the partition function is controlled by the following bounds which are proved in \cref{app:trotter}.

\begin{lemma}
\label{lem:size_bounds}
Define $\WW := \sum_{ij} w^z_{ij}$. The Trotterized partition function of \cref{lem:zero_free} satisfies
\begin{align}
    \label{eq:size_bounds}
    \ZZ_{\epsilon}(\bm{0}) = \| L \ket{\bm{0}} \|^{2M},\quad
    |\log(\ZZ_\epsilon(\bm{0}))| \leq 3\beta \WW, \quad  \text{and} \quad \sup_{|y_{i,m}|\leq 1} |\ZZ_\epsilon(\bm{y})| \leq 2^n e^{3\beta \WW}.
\end{align}
\end{lemma}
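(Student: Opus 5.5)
The plan is to verify each of the three claims in \cref{eq:size_bounds} directly from the operator form \cref{eq:Trotter_Z_expansion_1}. We will rely only on the norm bound $\|H_{ij}\|\le 3\,w^z_{ij}$ already noted after \cref{eq:Gamma_def}; note that $w^z_{ij}\ge 0$ by \cref{eq:LY_condition}.

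\emph{Value at the origin.} At $\bm{y}=\bm{0}$ each single-site factor $\ketbra{0}{0}_i + y_{i,m}\ketbra{1}{1}_i$ becomes $\ketbra{0}{0}_i$. Hence the diagonal operator at every layer is the rank-one projector $\ketbra{\bm{0}}{\bm{0}}$ onto the all-zeros string. The trace of the cyclic product $\prod_m L\ketbra{\bm 0}{\bm 0}L^\dagger$ then collapses to
\[
\langle \bm 0|L^\dagger L|\bm 0\rangle^M = \|L\ket{\bm 0}\|^{2M}.
\]
Equivalently, in \cref{eq:Trotter_Z_expansion_2} only the term with all $\bm{x}_m=\bm 0$ survives.

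\emph{Two-sided bound on $\log \ZZ_\epsilon(\bm 0)$.} We bound the operator norms of $L$ and $L^{-1}$. By submultiplicativity and $\|e^{-\frac{\epsilon}{2}H_{ij}}\|\le e^{\frac{\epsilon}{2}\|H_{ij}\|}\le e^{\frac{3\epsilon}{2}w^z_{ij}}$, we get
\[
\|L\|\le e^{3\epsilon\WW/2}.
\]
The same argument applied to $L^{-1}=\prod e^{+\frac{\epsilon}{2}H_{ij}}$ (factors in reverse order) gives $\|L^{-1}\|\le e^{3\epsilon\WW/2}$. Consequently
\[
e^{-3\epsilon\WW/2}\le \|L\ket{\bm 0}\|\le e^{3\epsilon\WW/2}.
\]
Raising to the power $2M$ and using $\epsilon M=\beta$ gives $e^{-3\beta\WW}\le \ZZ_\epsilon(\bm 0)\le e^{3\beta\WW}$. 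Since $\ZZ_\epsilon(\bm 0)>0$, this is exactly $|\log \ZZ_\epsilon(\bm 0)|\le 3\beta\WW$.

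\emph{Uniform bound on the closed polydisc.} For $|y_{i,m}|\le 1$, the diagonal operator $D_m=\prod_i(\ketbra{0}{0}_i+y_{i,m}\ketbra{1}{1}_i)$ has entries of modulus at most $1$, so $\|D_m\|\le 1$. We then use $|\mathrm{Tr}\,A|\le 2^n\|A\|$ on the $2^n$-dimensional space, together with submultiplicativity:
\[
|\ZZ_\epsilon(\bm y)| \le 2^n \prod_m \|L\|\,\|D_m\|\,\|L^\dagger\| \le 2^n e^{3\beta\WW}.
\]

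There is no real obstacle here. The only point needing care is the lower bound on $\|L\ket{\bm 0}\|$, which requires invertibility of $L$ and the bound on $\|L^{-1}\|$, rather than just $\|L\|$.
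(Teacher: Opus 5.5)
Your proof is correct and follows the paper's argument essentially step for step. The three steps match: the trace collapses at $\bm{y}=\bm{0}$ to $\langle \bm 0|L^\dagger L|\bm 0\rangle^M$; the two-sided bound on $\|L\ket{\bm 0}\|$ comes from $\|L^{\pm 1}\|\le e^{\frac{3}{2}\epsilon\WW}$ (the paper makes the lower bound explicit via $1=\|L^{-1}L\ket{\bm 0}\|\le\|L^{-1}\|\,\|L\ket{\bm 0}\|$); and the polydisc estimate uses $|\mathrm{Tr}\,A|\le 2^n\|A\|$ with $\|D_m\|\le 1$.
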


\section{Decay of correlations from Lee-Yang zero-freeness}
\label{sec:higher_order}

In this section, we prove that the two-point correlation functions of products of the $Z$-fields decay exponentially in the imaginary-time direction for the Hamiltonians $H_h$, at a rate proportional to $h$. Let $\ket{l}$ be the eigenstates of $H_h$ with the corresponding eigenvalues $E_l$: $H_h \ket{l} = E_l \ket{l}$, where $E_0 < E_1 \leq E_2 \cdots \leq E_{2^n-1}$. The inequality $E_0 < E_1$ is foreshadowing the fact that the ground state of $H_h$ is nondegenerate, which we prove in \cref{sec:spectral_gap}. The partition functions and the thermal states are
\begin{align}
    \ZZ = \mathrm{Tr}\left[e^{-\beta H_h}\right], \quad \ZZ_\star := e^{\beta E_0}\, \ZZ = \sum_l e^{-\beta (E_l - E_0)}, \quad \text{and}\quad \rho := \frac{1}{\ZZ}e^{-\beta H_h}.
\end{align}

For any operators $A$ and $B$ and $t \in [0, \beta]$, the imaginary-time two-point function is
\begin{align}
    \label{eq:two_point_def}
    \langle A(0)\, B(t) \rangle_\beta := \mathrm{Tr}\left[ \rho\, A\, e^{-t H_h}\, B\, e^{t H_h} \right].
\end{align}
For a nonempty subset $S \subseteq [n]$ of the sites, write $Z_S := \prod_{i \in S} Z_i$ for the product of the $Z$-fields on $S$. In this paper, we are interested in the two-point connected correlation function
\begin{align}
\label{eq:fS_def}
    f_S(t) := \braket{Z_S(0)\, Z_S(t)}_\beta - \braket{Z_S}_\beta^2 ,
\end{align}
which we prove decays exponentially in $\min\{t, \beta-t\}$ at a rate at least proportional to $h$, for every $S \subseteq [n]$. First, we show that the Trotterized correlation function (\cref{eq:Trotter_two-point_correlation_function}) decays exponentially using the zero-freeness of the Trotterized partition function (\cref{lem:zero_free}). The exponential decay is then extended to the exact correlation function using a Trotter error bound (\cref{lem:trotter_compare}) and an adaptive choice of the Trotter step size (\cref{thm:Z_decay}). The reason we require these additional steps is that taking the Trotter step size $\epsilon \rightarrow 0$ does not simply give us the exponential decay that we want on the exact correlation function.

\subsection{Decay of Trotterized correlation function}

Define the Trotterized correlation function as
\begin{align}
\label{eq:Trotter_two-point_correlation_function}
    f_{S,\epsilon}(k) := \frac{ \mathrm{Tr}\left[ \mathcal{S}^{(S)}_\epsilon\, \mathcal{S}_\epsilon^{\,k-1}\, \mathcal{S}^{(S)}_\epsilon\, \mathcal{S}_\epsilon^{\,M-k-1} \right] }{ \mathrm{Tr}\left[ \mathcal{S}_\epsilon^{\,M} \right] } - \left( \frac{ \mathrm{Tr}\left[ \mathcal{S}^{(S)}_\epsilon\, \mathcal{S}_\epsilon^{\,M-1} \right] }{ \mathrm{Tr}\left[ \mathcal{S}_\epsilon^{\,M} \right] } \right)^{\!2},
\end{align}
where
\begin{align}
    \label{eq:S_epsilon_def}
    \mathcal{S}_\epsilon := L\; e^{\epsilon h \sum_j Z_j}\; L^\dagger, \quad \mathcal{S}^{(S)}_\epsilon := L\; e^{\epsilon h \sum_j Z_j} Z_S\; L^\dagger
\end{align}
and $L$ is defined in \cref{eq:trotter_operator}. The function $f_{S,\epsilon}(k)$ is an imaginary-time correlator of the Trotterized evolution with two $Z_S$ inserted at the time layers $0$ and $k$. The following lemma expresses $f_{S,\epsilon}(k)$ through the Trotterized partition function $\ZZ_\epsilon(\bm{y})$ evaluated at the point $\bm{y} = y_\star \bm{1}=:\bm{y}_\star$, with $y_\star := e^{-2\epsilon h}$, and at its reflections in the variables at layers $0$ and $k$. The evaluation point lies well inside the zero-free region $\big\{ |y_{i,m}| < 1$ for all $i \in \mathbb{Z}_n, m \in \mathbb{Z}_M \big\}$ of \cref{lem:zero_free}, each coordinate $y_\star = e^{-2\epsilon h}$ at distance $1 - e^{-2\epsilon h}$ from the boundary.

\begin{lemma}[Correlator through sign flips]
    \label{lem:log_derivative}
    For a layer $m$, let $\sigma_m$ be the reflection of the variable $\bm{y}$ that reverses the sign of the coordinates $y_{i,m}$ for every $i \in S$ and leaves all other coordinates the same. Then, for all $1 \le k \le M-1$,
    \begin{align}
    \label{eq:correlator_log}
        f_{S,\epsilon}(k) = \left[\; \frac{ \ZZ_\epsilon( \sigma_0 \sigma_k \bm{y} ) }{ \ZZ_\epsilon( \bm{y} ) } - \frac{ \ZZ_\epsilon( \sigma_0 \bm{y} ) }{ \ZZ_\epsilon( \bm{y} ) } \cdot \frac{ \ZZ_\epsilon( \sigma_k \bm{y} ) }{ \ZZ_\epsilon( \bm{y} ) } \;\right]_{\bm{y} = \bm{y}_\star} .
    \end{align}
\end{lemma}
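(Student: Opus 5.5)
The plan is to identify each trace in \cref{eq:Trotter_two-point_correlation_function} with an evaluation of $\ZZ_\epsilon$ by comparing with \cref{eq:Trotter_Z_expansion_1}. The basic observation is a diagonal identity at $y_\star = e^{-2\epsilon h}$. On site $i$ we have $e^{\epsilon h Z_i} = e^{\epsilon h}\big(\ketbra{0}{0}_i + y_\star \ketbra{1}{1}_i\big)$. Therefore
\begin{align*}
e^{\epsilon h\sum_j Z_j} = e^{n\epsilon h}\prod_{i}\big(\ketbra{0}{0}_i + y_\star\ketbra{1}{1}_i\big).
\end{align*}
Multiplying by $Z_S$ flips the sign of the $\ketbra{1}{1}_i$ coefficient for $i\in S$ and leaves the $\ketbra{0}{0}_i$ coefficient unchanged. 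Hence
\begin{align*}
e^{\epsilon h\sum_j Z_j}Z_S = e^{n\epsilon h}\prod_{i}\big(\ketbra{0}{0}_i + y_i\ketbra{1}{1}_i\big), \qquad y_i = -y_\star \text{ for } i\in S,\quad y_i = y_\star \text{ otherwise}.
\end{align*}
So $\mathcal{S}_\epsilon$ is $e^{n\epsilon h}$ times the layer factor $L(\prod_i(\cdots))L^\dagger$ evaluated at $y_{\cdot,m}=y_\star$. Likewise $\mathcal{S}^{(S)}_\epsilon$ is $e^{n\epsilon h}$ times the same factor with the layer-$m$ coordinates on $S$ sign-reversed, that is, evaluated at the image under $\sigma_m$.

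Next I will match layer indices. The product in \cref{eq:Trotter_Z_expansion_1} is ordered over $m\in\mathbb{Z}_M$, and the trace is cyclic. The numerator word $\mathcal{S}^{(S)}_\epsilon\mathcal{S}_\epsilon^{k-1}\mathcal{S}^{(S)}_\epsilon\mathcal{S}_\epsilon^{M-k-1}$ has $M$ factors, with the $Z_S$ insertions in positions $0$ and $k$. Hence
\begin{align*}
\mathrm{Tr}[\cdots] = e^{nM\epsilon h}\,\ZZ_\epsilon(\sigma_0\sigma_k\bm{y}_\star).
\end{align*}
The range $1\le k\le M-1$ ensures that the two insertions sit in distinct layers. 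In the same way,
\begin{align*}
\mathrm{Tr}[\mathcal{S}_\epsilon^M] = e^{nM\epsilon h}\ZZ_\epsilon(\bm{y}_\star), \qquad \mathrm{Tr}[\mathcal{S}^{(S)}_\epsilon\mathcal{S}_\epsilon^{M-1}] = e^{nM\epsilon h}\ZZ_\epsilon(\sigma_0\bm{y}_\star).
\end{align*}
The common prefactor $e^{nM\epsilon h}$ cancels in every ratio.

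What remains is to turn the squared one-point term into the product $\frac{\ZZ_\epsilon(\sigma_0\bm{y})}{\ZZ_\epsilon(\bm{y})}\cdot\frac{\ZZ_\epsilon(\sigma_k\bm{y})}{\ZZ_\epsilon(\bm{y})}$. For this I will use cyclicity of the trace. At the uniform point every layer factor equals $\mathcal{S}_\epsilon$, so the single insertion can be rotated from position $0$ to position $k$. This gives $\ZZ_\epsilon(\sigma_0\bm{y}_\star)=\ZZ_\epsilon(\sigma_k\bm{y}_\star)$. The square then equals the stated product, and this proves \cref{eq:correlator_log}.

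Nonvanishing of the denominator follows from \cref{lem:zero_free}, since $|y_\star|<1$. Alternatively, it follows from positivity of $\mathrm{Tr}[\mathcal{S}_\epsilon^M]$, because $\mathcal{S}_\epsilon$ is positive definite.

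The only step requiring care is the bookkeeping. One must check that the operator-ordering convention in \cref{eq:Trotter_Z_expansion_1}, namely that layer $m$ is the $m$-th factor in the product, matches the placement of the insertions, and that the factor $e^{n\epsilon h}$ per layer is accounted for uniformly. I do not expect a genuine obstacle here.
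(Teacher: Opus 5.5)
Your proposal is correct and follows essentially the paper's argument. You write each layer factor $e^{\epsilon h\sum_j Z_j}$, with or without $Z_S$, as $e^{n\epsilon h}$ times the diagonal layer operator at $\bm{y}_\star$ or at its $\sigma_m$-reflection, collect the prefactors into $e^{\beta n h}$, and use cyclicity of the trace to get $\ZZ_\epsilon(\sigma_0\bm{y}_\star)=\ZZ_\epsilon(\sigma_k\bm{y}_\star)$. The only difference is cosmetic: you make this identification at the operator level, whereas the paper reads it off the $Z$-basis expansion, and the two are equivalent.
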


\begin{proof}
    Expanding each trace in \cref{eq:Trotter_two-point_correlation_function} in the $Z$ basis, by inserting a resolution of the identity between consecutive layers, brings it in to a similar form of the sum in \cref{eq:Trotter_Z_expansion_2} over configurations $\bm{x}_0, \ldots, \bm{x}_{M-1}$. The coupling factor $\langle \bm{x}_m | L^\dagger L | \bm{x}_{m+1} \rangle$ of each layer is common to all three traces and unchanged, while the diagonal factor $\prod_j y_{j,m}^{(\bm{x}_m)_j}$ is modified as follows.

    For the factor $e^{\epsilon h \sum_j Z_j}$ in $\mathcal{S}_{\epsilon}$, the corresponding layer contributes $\langle \bm{x}_m | e^{\epsilon h \sum_j Z_j} | \bm{x}_m \rangle = e^{n \epsilon h} \prod_j y_\star^{\,(\bm{x}_m)_j}$, which equals $e^{n \epsilon h}$ times the diagonal factor of \cref{eq:Trotter_Z_expansion_2} at $y_{j,m} = y_\star$. Collecting the $M$ products of $e^{n \epsilon h}$ into $e^{\beta n h}$,
    \begin{align}
        \mathrm{Tr}\big[ \mathcal{S}_\epsilon^{\,M} \big] = e^{\beta n h}\, \ZZ_\epsilon(\bm{y}_\star) .
    \end{align}
    The additional $Z_S$ in $\mathcal{S}^{(S)}_\epsilon$ multiplies the diagonal factor of its layer by $\prod_{i \in S} (-1)^{(\bm{x}_m)_i}$. Since $(-1)^{x} y^{x} = (-y)^{x}$ for $x \in \{0,1\}$, this is the substitution $y_{i,m} \mapsto -y_{i,m}$ for $i \in S$ in that layer carried out by the reflection $\sigma_m$. Placing the single insertion $Z_S$ in layer $0$, and the two insertions in layers $0$ and $k$,
    \begin{gather}
        \mathrm{Tr}\big[ \mathcal{S}^{(S)}_\epsilon\, \mathcal{S}_\epsilon^{\,M-1} \big] = e^{\beta n h}\, \ZZ_\epsilon(\sigma_0\, \bm{y}_\star) , \nonumber \\[4pt]
        \mathrm{Tr}\big[ \mathcal{S}^{(S)}_\epsilon\, \mathcal{S}_\epsilon^{\,k-1}\, \mathcal{S}^{(S)}_\epsilon\, \mathcal{S}_\epsilon^{\,M-k-1} \big] = e^{\beta n h}\, \ZZ_\epsilon(\sigma_0 \sigma_k\, \bm{y}_\star) . \label{eq:S_Z_equivalence}
    \end{gather}
    The single-insertion trace does not depend on the layer of its insertion, by cyclicity of the trace, so $\ZZ_\epsilon(\sigma_0\, \bm{y}_\star) = \ZZ_\epsilon(\sigma_k\, \bm{y}_\star)$ and the squared term in \cref{eq:Trotter_two-point_correlation_function} is the product in \cref{eq:correlator_log}. Dividing the three traces by $\mathrm{Tr}\big[ \mathcal{S}_\epsilon^{\,M} \big]$ cancels the factors $e^{\beta n h}$ and gives \cref{eq:correlator_log}.
\end{proof}

The following lemma expresses $f_{S,\epsilon}(k)$ as a combination of $\log \ZZ_\epsilon$ at $\bm{y}_\star$ and its reflections, reducing the bound on the correlator to a bound on that combination.

\begin{lemma}[Correlator through log-partition function]
\label{lem:correlator_reduction}
    Let $\Phi := \log \ZZ_\epsilon$, analytic on the zero-free polydisc $\{ \bm{y} : |y_l| < 1\ \forall l \}$ of \cref{lem:zero_free}. For $1 \le k \le M-1$, set
    \begin{align}
    \label{eq:u_k_def}
        u_k := \big[\, \Phi(\sigma_0 \sigma_k \bm{y}) + \Phi(\bm{y}) - \Phi(\sigma_0 \bm{y}) - \Phi(\sigma_k \bm{y}) \,\big]_{\bm{y} = \bm{y}_\star} .
    \end{align}
    Then $f_{S,\epsilon}(k) = A\, ( e^{u_k} - 1 )$ for a factor $A$ with $|A| \le 1$. Moreover $|f_{S,\epsilon}(k)| \le 2$, and $|f_{S,\epsilon}(k)| \le 2\, |u_k|$ whenever $|u_k| \le 1$.
\end{lemma}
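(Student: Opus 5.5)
Starting from \cref{lem:log_derivative}, write the three ratios as exponentials of differences of $\Phi$. Since $\ZZ_\epsilon$ is zero-free on the open polydisc and $\bm{y}_\star$, $\sigma_0\bm{y}_\star$, $\sigma_k\bm{y}_\star$, $\sigma_0\sigma_k\bm{y}_\star$ all lie in it (reflections preserve moduli), $\Phi=\log\ZZ_\epsilon$ is a well-defined analytic branch there. Set $r_0 := \ZZ_\epsilon(\sigma_0\bm{y}_\star)/\ZZ_\epsilon(\bm{y}_\star) = e^{\Phi(\sigma_0\bm{y}_\star)-\Phi(\bm{y}_\star)}$, and similarly $r_k$. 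Then
\begin{align*}
\frac{\ZZ_\epsilon(\sigma_0\sigma_k\bm{y}_\star)}{\ZZ_\epsilon(\bm{y}_\star)} = e^{\Phi(\sigma_0\sigma_k\bm{y}_\star)-\Phi(\bm{y}_\star)} = r_0 r_k\, e^{u_k},
\end{align*}
so that $f_{S,\epsilon}(k) = r_0 r_k (e^{u_k}-1)$. We take $A := r_0 r_k$. The branch choice does not matter, since only exponentials of differences appear and these are single-valued.

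Next we bound $|A|\le 1$. By \eqref{eq:S_Z_equivalence}, $r_0 = \mathrm{Tr}[\mathcal{S}^{(S)}_\epsilon \mathcal{S}_\epsilon^{M-1}]/\mathrm{Tr}[\mathcal{S}_\epsilon^M]$. Because $e^{\epsilon h\sum_j Z_j}$ is positive and commutes with $Z_S$, we can write $\mathcal{S}^{(S)}_\epsilon = L D^{1/2} Z_S D^{1/2} L^\dagger$ with $D = e^{\epsilon h\sum_j Z_j}$. Cyclicity then gives
\begin{align*}
r_0 = \mathrm{Tr}\big[Z_S\, D^{1/2}L^\dagger \mathcal{S}_\epsilon^{M-1} L D^{1/2}\big] \big/ \mathrm{Tr}\big[D^{1/2}L^\dagger \mathcal{S}_\epsilon^{M-1} L D^{1/2}\big].
\end{align*}
The operator $P := D^{1/2}L^\dagger \mathcal{S}_\epsilon^{M-1} L D^{1/2}$ is positive semidefinite, because $\mathcal{S}_\epsilon = L D L^\dagger \ge 0$. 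Therefore $r_0 = \mathrm{Tr}[Z_S P]/\mathrm{Tr}[P]$ is real, and $|r_0|\le \|Z_S\| = 1$. The same argument applies to $r_k$ (indeed $r_k = r_0$ by cyclicity), so $|A|\le 1$.

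For $|f_{S,\epsilon}(k)|\le 2$ we go back to the definition \eqref{eq:Trotter_two-point_correlation_function}. The second term has modulus $|r_0|^2\le 1$, so it suffices to bound the two-insertion ratio by $1$. Write the numerator as $\mathrm{Tr}[Z_S Q_1 Z_S Q_2]$ with $Q_1 = D^{1/2}L^\dagger\mathcal{S}_\epsilon^{k-1}LD^{1/2}\ge 0$ and $Q_2 = D^{1/2}L^\dagger\mathcal{S}_\epsilon^{M-k-1}LD^{1/2}\ge 0$, and the denominator as $\mathrm{Tr}[Q_1Q_2]$; both are obtained by cyclically regrouping the factors $L D L^\dagger$. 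Then
\begin{align*}
|\mathrm{Tr}[Z_S Q_1 Z_S Q_2]| = |\mathrm{Tr}[(Q_2^{1/2} Z_S Q_1^{1/2})(Q_1^{1/2} Z_S Q_2^{1/2})]| \le \|Q_1^{1/2}Z_SQ_2^{1/2}\|_2^2 = \mathrm{Tr}[Z_S Q_1 Z_S Q_2].
\end{align*}
The Hilbert--Schmidt norm squared equals $\mathrm{Tr}[Q_2^{1/2}Z_SQ_1Z_SQ_2^{1/2}]$. To compare it with $\mathrm{Tr}[Q_1Q_2]$, expand in the $Z$-basis. Here $Z_S$ is diagonal with entries $\pm1$, and the claim needs $|\sum_{a,b}(Q_1)_{ab}(Q_2)_{ba}s_as_b| \le \sum_{a,b}(Q_1)_{ab}(Q_2)_{ba}$. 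This is the step I expect to be the main obstacle. It holds if the matrix elements $\langle a|Q_1|b\rangle\langle b|Q_2|a\rangle$ are nonnegative, which is true for stoquastic $L$ but not in general here. The cleaner route is Cauchy--Schwarz in the Hilbert--Schmidt inner product with weights:
\begin{align*}
|\mathrm{Tr}[Z_SQ_1Z_SQ_2]| \le \mathrm{Tr}[Q_1^{1/2} Z_S Q_2 Z_S Q_1^{1/2}]^{1/2}\,\mathrm{Tr}[Q_1^{1/2}Q_2Q_1^{1/2}]^{1/2}\cdot(\ldots).
\end{align*}
Failing that, I would use the unitary invariance of the trace norm, $|\mathrm{Tr}[UQ_1U^\dagger Q_2]| \le \|Q_1\|\,\mathrm{Tr}\,Q_2$, combined with the factorization $\mathcal{S}_\epsilon^{M}$. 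If both routes fail, I would fall back on the bound $|f|\le |A|\,|e^{u_k}-1|$, restricted to the regime where it is needed. Before committing, I would try hardest to establish the operator inequality $|\mathrm{Tr}[Z_SQ_1Z_SQ_2]|\le \mathrm{Tr}[Q_1Q_2]$ via the variational (Hölder) bound on $\mathrm{Tr}[Q_1^{1/2}Z_SQ_2^{1/2}\cdot Q_2^{1/2}Z_SQ_1^{1/2}]$.

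Finally, when $|u_k|\le 1$ we have
\begin{align*}
|e^{u_k}-1| \le |u_k|\sum_{m\ge 1}\frac{|u_k|^{m-1}}{m!} \le (e-1)|u_k| \le 2|u_k|,
\end{align*}
and combining this with $|A|\le 1$ gives $|f_{S,\epsilon}(k)|\le 2|u_k|$.
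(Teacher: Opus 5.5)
You have several parts right: the identity $f_{S,\epsilon}(k) = r_0 r_k\,(e^{u_k}-1)$, the bound $|A|\le 1$, and $|e^{u}-1|\le (e-1)|u|$ for $|u|\le 1$. Your $|A|\le 1$ argument uses $P = D^{1/2}L^\dagger\mathcal{S}_\epsilon^{M-1}LD^{1/2}\succeq 0$ with $|\mathrm{Tr}[Z_S P]|\le \mathrm{Tr}[P]$; it is correct and a little more direct than the paper's appeal to H\"older.

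The genuine gap is the claim $|f_{S,\epsilon}(k)|\le 2$. You never show that the two-insertion ratio $\mathrm{Tr}[Z_S Q_1 Z_S Q_2]/\mathrm{Tr}[Q_1Q_2]$ is at most one, and none of your candidate routes does so:
\begin{itemize}
\item Stoquasticity is not available here.
\item Your displayed chain is really the identity $\mathrm{Tr}[Z_SQ_1Z_SQ_2]=\|Q_1^{1/2}Z_SQ_2^{1/2}\|_2^2$. This gives nonnegativity but no comparison with the denominator.
\item The bound $\|Q_1\|\,\mathrm{Tr}[Q_2]$ is never smaller than $\mathrm{Tr}[Q_1Q_2]$, and it can exceed it by a factor up to the dimension $2^n$.
\item Falling back on $|A|\,|e^{u_k}-1|$ yields nothing uniform, because this lemma places no control on $\mathrm{Re}\,u_k$. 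Yet the uniform bound $2$ is exactly what the proof of \cref{prop:fixed_eps} uses in the case $B>1$.
\end{itemize}
In fact the inequality you are aiming for, $|\mathrm{Tr}[Z_SQ_1Z_SQ_2]|\le\mathrm{Tr}[Q_1Q_2]$, is false for general positive $Q_1,Q_2$. On one qubit take $Q_1=|+\rangle\langle+|$, $Q_2=|-\rangle\langle-|$ and $Z_S=Z$: the left side is $1$ and the right side is $0$.

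The missing observation is that your $Q_1$ and $Q_2$ are powers of a single positive operator. Set $R:=D^{1/2}L^\dagger L D^{1/2}\succeq 0$. Then $R^m = D^{1/2}L^\dagger\mathcal{S}_\epsilon^{m-1}LD^{1/2}$, so $Q_1=R^k$, $Q_2=R^{M-k}$ and $\mathrm{Tr}[Q_1Q_2]=\mathrm{Tr}[R^M]$; your single-insertion $P$ is just $R^M$. Schatten-H\"older with exponents $M/k$ and $M/(M-k)$ then gives
\[
|\mathrm{Tr}[Z_SR^kZ_SR^{M-k}]|\le\|R^k\|_{M/k}\,\|R^{M-k}\|_{M/(M-k)}=\mathrm{Tr}[R^M].
\]
This is how the paper proves $|\mathcal{G}_{2,\epsilon}(k)|\le 1$ in \cref{lem:moment_bounds}, after which $|f_{S,\epsilon}(k)|\le 1+1$. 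For an elementary version, expand in an eigenbasis $\{|i\rangle\}$ of $R$ with eigenvalues $\lambda_i\ge 0$. Then use weighted AM--GM together with the double stochasticity of $|\langle i|Z_S|j\rangle|^2$, which holds because $Z_S$ is unitary:
\[
\mathrm{Tr}[Z_SR^kZ_SR^{M-k}]=\sum_{i,j}\lambda_i^{k}\lambda_j^{M-k}\,|\langle i|Z_S|j\rangle|^2\le\sum_{i,j}\Big(\tfrac{k}{M}\lambda_i^{M}+\tfrac{M-k}{M}\lambda_j^{M}\Big)|\langle i|Z_S|j\rangle|^2=\mathrm{Tr}[R^M].
\]
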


\begin{proof}
    Since $\ZZ_\epsilon$ is zero-free on the polydisc, $\Phi = \log \ZZ_\epsilon$ is analytic there. Each reflection $\sigma_m$ maps the polydisc to itself, so the reflected points $\sigma_0\, \bm{y}_\star$, $\sigma_k\, \bm{y}_\star$, and $\sigma_0 \sigma_k\, \bm{y}_\star$ lie in it and \cref{eq:u_k_def} is well defined.

    Write $A := \big[ \ZZ_\epsilon(\sigma_0 \bm{y})\, \ZZ_\epsilon(\sigma_k \bm{y}) / \ZZ_\epsilon(\bm{y})^2 \big]_{\bm{y} = \bm{y}_\star}$ for the second term of \cref{eq:correlator_log}. Exponentiating \cref{eq:u_k_def} gives the first term as $\big[ \ZZ_\epsilon(\sigma_0 \sigma_k \bm{y}) / \ZZ_\epsilon(\bm{y}) \big]_{\bm{y} = \bm{y}_\star} = A\, e^{u_k}$, so \cref{lem:log_derivative} yields $f_{S,\epsilon}(k) = A\, (e^{u_k} - 1)$. The two factors of $A$ are single-insertion Trotterized moments, and by \cref{lem:moment_bounds}, of modulus at most one $|A| \le 1$. For the two bounds, $f_{S,\epsilon}(k)$ is a difference of two Trotterized moments, each of modulus at most one by \cref{lem:moment_bounds}, so $|f_{S,\epsilon}(k)| \le 2$. And if $|u_k| \le 1$, then $|e^{u_k} - 1| \le 2\, |u_k|$, hence $|f_{S,\epsilon}(k)| \le |A|\, 2\, |u_k| \le 2\, |u_k|$.
\end{proof}

By \cref{lem:correlator_reduction}, bounding $f_{S,\epsilon}(k)$ reduces to bounding $u_k$. As the proof of \cref{prop:fixed_eps} shows, the only terms in the Taylor series of $\log \ZZ_\epsilon(\bm{y})$ that contribute to $u_k$ are those that contain variables at both layer $0$ and layer $k$. Given this, the following \cref{lem:taylor_support} is a key ingredient in upper bounding $f_{S,\epsilon}(k)$. It says that any monomial with a non-zero coefficient in the Taylor series of $\log \ZZ_\epsilon(\bm{y})$ must be a connected cluster of variables in the imaginary-time direction, and therefore any monomial that contains variables from far apart imaginary-time layers must have a high degree.

\begin{lemma}[Support of the Taylor coefficients]
    \label{lem:taylor_support}
    Let $\log \ZZ_\epsilon(\bm{y}) = \sum_{\bm{q}} a_{\bm{q}}\, \bm{y}^{\bm{q}}$ be the Taylor series expansion around $\bm{y} = 0$, where $\bm{q} = (q_{j,m})_{j \in \mathbb{Z}_n,\, m \in \mathbb{Z}_M}$ is a multi-index nonnegative integer assignment to variables $\bm{y} = (y_{j,m})_{j \in \mathbb{Z}_n,\, m \in \mathbb{Z}_M}$ as
    \begin{align}
        \bm{y}^{\bm{q}} := \prod_{(j,m)} y_{j,m}^{\, q_{j,m}}.
    \end{align}
    Let 
    \begin{align}
        \mathrm{supp}(\bm{q}) := \left\{ (j,m) : q_{j,m} \ge 1 \right\}, \quad |\bm{q}| := \sum_{(j,m)} q_{j,m}.
    \end{align}
    If $a_{\bm{q}} \neq 0$ and $\mathrm{supp}(\bm{q})$ contains a location in layer $0$ and a location in layer $k$, then
    \begin{align}
        |\bm{q}| \;\ge\; \min(k,\, M - k) + 1.
    \end{align}
\end{lemma}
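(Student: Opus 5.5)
Write $\ZZ_\epsilon(\bm{y}) = \ZZ_\epsilon(\bm{0})\,(1 + P(\bm{y}))$, where $P$ is a polynomial with $P(\bm{0})=0$. By \cref{eq:Trotter_Z_expansion_2}, each monomial of $P$ comes from a configuration $(\bm{x}_0,\dots,\bm{x}_{M-1})$ with at least one excited bit. Concretely, $\bm{y}^{\bm{q}}$ with $\bm{q}\in\{0,1\}^{n\times M}$ is the indicator that $(\bm{x}_m)_j = q_{j,m}$, and its coefficient is the normalized product of transfer elements $\langle \bm{x}_m | L^\dagger L | \bm{x}_{m+1}\rangle / \langle \bm{0}|L^\dagger L|\bm{0}\rangle$ around the cycle $\mathbb{Z}_M$. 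Here I use $\ZZ_\epsilon(\bm{0}) = \langle\bm{0}|L^\dagger L|\bm{0}\rangle^M$, which is the content of \cref{lem:size_bounds}. We then expand $\log(1+P) = \sum_{r\ge1} (-1)^{r+1}P^r/r$. The plan is to show that this is a polymer-gas cluster expansion in the imaginary-time direction, with polymers being maximal runs of consecutive nonempty layers.

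\textbf{Step 1 (factorization over layer-runs).} Call a layer $m$ \emph{occupied} if $\bm{x}_m \neq \bm{0}$. Let the occupied layers of a configuration split into maximal cyclically consecutive blocks (runs) $\gamma_1,\dots,\gamma_p$, separated by at least one empty layer. The weight then factorizes as $\prod_a w(\gamma_a)$. The reason is that between two runs there is an empty layer, so every transfer factor touching that empty layer on one side is a factor $\langle \bm{x}|L^\dagger L|\bm{0}\rangle$ or $\langle \bm{0}|L^\dagger L|\bm{0}\rangle$, which involves only one run; after normalization, empty-to-empty factors are $1$. The one exception is a configuration in which every layer is occupied. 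That is a single "polymer" wrapping the whole cycle, and its monomial already has degree $\ge M \ge \min(k,M-k)+1$, so it is harmless. Hence $1+P = \sum_{\text{compatible families}} \prod_a w(\gamma_a)\,\bm{y}^{\bm{q}(\gamma_a)}$, where compatibility means the runs are pairwise disjoint and non-adjacent (separated by an empty layer). Each $w(\gamma)$ is a polynomial supported on variables in the layers of $\gamma$.

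\textbf{Step 2 (cluster expansion support).} The standard formal polymer identity gives $\log(1+P)=\sum_{\text{clusters}} \phi^T(\gamma_1,\dots,\gamma_r)\prod w(\gamma_a)\bm{y}^{\bm{q}(\gamma_a)}$. Here the Ursell function $\phi^T$ vanishes unless the incompatibility graph on $\{\gamma_a\}$ is connected, and incompatibility means overlapping or adjacent layer sets. This is an identity of formal power series in $\bm{y}$, and it is valid because only finitely many clusters contribute to each monomial (each polymer has degree $\ge 1$). Matching coefficients, if $a_{\bm{q}}\neq 0$ then $\bm{q}$ is a sum of polymer exponents whose layer sets form a connected union under the overlap-or-adjacent relation. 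So the union of layers of $\mathrm{supp}(\bm{q})$ is a cyclically contiguous interval of $\mathbb{Z}_M$. (If the union of runs covers a layer, that layer carries a variable, because every layer of a run is occupied.)

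\textbf{Step 3 (degree count).} A contiguous cyclic interval containing layers $0$ and $k$ has at least $\min(k,M-k)+1$ layers. Every such layer contributes at least one unit to $|\bm{q}|$, which gives the bound. The main obstacle is Step 1: I have to verify carefully that the normalization by $\langle\bm{0}|L^\dagger L|\bm{0}\rangle$ makes the weights genuinely factorize across an empty layer. The transfer elements on both sides of an empty layer each involve only one neighbouring run, so this should work. I also have to handle the wrap-around configurations and the cases $k$ near $0$ or $M$, where the cyclic interval may go either way around; in all of these the degree is at least $\min(k,M-k)+1$ anyway.
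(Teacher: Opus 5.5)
Your proof is correct, but it takes a genuinely different route from the paper.

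\textbf{The paper's route.} The paper never builds a global expansion. It fixes the particular $\bm{q}$ and sets every variable outside $\mathrm{supp}(\bm{q})$ to zero, which leaves $a_{\bm{q}}$ unchanged. A layer with no site in $\mathrm{supp}(\bm{q})$ then carries the rank-one projector $|\bm{0}\rangle\langle\bm{0}|$. If both arcs between layers $0$ and $k$ contained such a dead layer, the restricted polynomial would factor as $F_1 F_2$ in disjoint sets of variables, with $F_1(\bm{0}), F_2(\bm{0}) \neq 0$. Its logarithm $\log F_1 + \log F_2$ could then contain no monomial touching both layers, so one arc must be fully covered by $\mathrm{supp}(\bm{q})$.

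\textbf{Your route.} You normalize by $\langle \bm{0}|L^\dagger L|\bm{0}\rangle^M$ and rewrite $\ZZ_\epsilon$ globally as a hard-core gas of maximal runs of occupied layers. You then invoke the formal Mayer/Ursell cluster expansion to conclude that every monomial of $\log \ZZ_\epsilon$ comes from a cluster with connected incompatibility graph, whose layer set is a cyclic interval.

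\textbf{The step you flagged.} Step 1 does go through:
\begin{itemize}
\item Every transfer factor touching an empty layer involves only one run.
\item Empty-to-empty factors normalize to $1$; the normalization is legitimate since $\langle \bm{0}|L^\dagger L|\bm{0}\rangle = \| L \ket{\bm{0}} \|^2 > 0$ for invertible $L$.
\item Families of pairwise disjoint, non-adjacent runs (with nonzero configurations on each of their layers) are in bijection with configurations having at least one empty layer.
\item The all-occupied configurations become polymers with layer set $\mathbb{Z}_M$. These are automatically incompatible with every other polymer, so the compatible-family representation of $1+P$ is exact.
\end{itemize}
The formal identity is then legitimate because all activities have zero constant term and there are finitely many polymers.

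\textbf{What each approach buys.} The paper's argument is shorter and more elementary: it uses only that the logarithm of a product is the sum of logarithms, applied once to one restricted polynomial. Yours needs the standard but nontrivial combinatorial cluster-expansion identity, plus the bookkeeping of the bijection and the wrap-around case. In return, it gives an explicit formula for each $a_{\bm{q}}$ as a sum over connected clusters of runs. That is strictly more information than the support statement, and it would be the natural starting point if one wanted to bound the coefficients directly rather than through the Borel--Carath\'eodory argument. Both routes yield the same structural fact: the set of layers met by $\mathrm{supp}(\bm{q})$ is a cyclic interval.
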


\begin{proof}
    We use two elementary observations. First, the coefficient of $\bm{y}^{\bm{q}}$ in the Taylor series of $\log \ZZ_\epsilon(\bm{y})$ is unchanged if every variable outside $\mathrm{supp}(\bm{q})$ is set to zero, since doing so only deletes the monomials containing those variables. We may therefore work with the restricted polynomial $\tilde{\ZZ}$, obtained from $\ZZ_\epsilon$ by setting $y_l = 0$ for all $l \notin \mathrm{supp}(\bm{q})$.

    Second, call a layer $m$ \emph{dead} if no site of layer $m$ lies in $\mathrm{supp}(\bm{q})$. In $\tilde{\ZZ}$ every dead layer carries the rank one projector $\ketbra{\bm{0}}{\bm{0}} =\ketbra{0}{0}^{\otimes n} $ and a rank-one projector inserted into a trace cuts the imaginary-time direction at that layer. Concretely, if $m_1$ and $m_2$ are two dead layers, cutting the imaginary-time circle at both of them splits the cyclic trace as 
    \begin{align}
        \tilde{\ZZ}(\bm{y}_{\text{rest}}) = \big\langle \bm{0} \big|\, \Pi_{1}\, \big| \bm{0} \big\rangle \cdot \big\langle \bm{0} \big|\, \Pi_{2}\, \big| \bm{0} \big\rangle =: F_1\!\left( \bm{y}_{\mathrm{arc}_1} \right) F_2\!\left( \bm{y}_{\mathrm{arc}_2} \right),
    \end{align}
    where $\Pi_1$ and $\Pi_2$ are the partial operator products between the two cuts, and $F_1$, $F_2$ are polynomials in the disjoint variable sets present in the two arcs of the imaginary-time circle. $\log(\tilde{\ZZ})$ is analytic in the same region where $\log(\ZZ_\epsilon)$ is analytic in the remaining variables $\bm{y}_{\text{rest}}$, and so are $\log F_1$ and $\log F_2$. Hence $\log \tilde{\ZZ} = \log F_1 + \log F_2$, a sum of a function of the arc-one variables and a function of the arc-two variables, contain no monomial whose support contains variables in both the arcs.

    Now suppose $a_{\bm{q}} \neq 0$ with $\mathrm{supp}(\bm{q})$ containing variables from layers $0$ and $k$. These two layers split the remaining layers into two arcs, of $k-1$ and $M-k-1$ layers. If both arcs contained a dead layer, cutting at these dead layers would place layers $0$ and $k$ in different arcs, and by the above second observation the coefficient of $\bm{y}^{\bm{q}}$ in $\log \tilde{\ZZ}$, which equals $a_{\bm{q}}$, would vanish. Hence at least one arc between layer $0$ and $k$ has no dead layer, that is, every layer of that arc has a site in $\mathrm{supp}(\bm{q})$. That arc has at least $\min(k, M-k) - 1$ layers, so together with the locations in layers $0$ and $k$, $|\bm{q}| \ge |\mathrm{supp}(\bm{q})| \ge \min(k, M-k) + 1$.
\end{proof}

We abbreviate $d_k := \min(k,\, M - k) + 1$, so that the bound of \cref{lem:taylor_support} reads $|\bm{q}| \ge d_k$. It remains to use this high-degree bound and give an exponential bound on $f_{S,\epsilon}(k)$, using the analyticity of $\log \ZZ_\epsilon$. This is done in \cref{prop:fixed_eps}. We need the following bound, in \cref{prop:fixed_eps}, on the modulus of $\log \ZZ_\epsilon$ on a polydisc lying inside the zero-free region.

\begin{lemma}[Modulus bound on the log-partition function]
\label{lem:logZ_modulus}
    Let $C_\star := \max\{ n,\, 6 \beta \WW \}$. If $\epsilon h \le 2$, then $\log \ZZ_\epsilon$, analytic on the zero-free polydisc, satisfies
    \begin{align}
        \sup_{|y_l| \le e^{-\epsilon h/2}\ \forall l} \big| \log \ZZ_\epsilon(\bm{y}) \big| \;\le\; \frac{16\, C_\star}{\epsilon h} .
    \end{align}
\end{lemma}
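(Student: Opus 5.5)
We bound $\log \ZZ_\epsilon$ on the smaller polydisc with the Borel--Carathéodory inequality, which controls $|\log \ZZ_\epsilon|$ by its real part, i.e.\ by $\log|\ZZ_\epsilon|$. Since $\ZZ_\epsilon$ is zero-free on the open unit polydisc (\cref{lem:zero_free}), $\Phi = \log \ZZ_\epsilon$ is analytic there. We take the branch that is real at $\bm{0}$: $\ZZ_\epsilon(\bm{0}) = \|L\ket{\bm 0}\|^{2M} > 0$ by \cref{lem:size_bounds}. The difficulty is that zero-freeness is multivariate, while Borel--Carathéodory is one-dimensional. We reduce to one variable by restricting to complex lines through the origin. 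Fix $\bm{y}$ with $|y_l| \le r := e^{-\epsilon h/2}$ for all $l$, and consider $g(z) := \Phi(z\bm{y}/r)$. This is analytic for $|z| < 1$, since then $|z y_l / r| < 1$, and it satisfies $g(0) = \Phi(\bm 0)$.

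\textbf{Upper bound on $\mathrm{Re}\, g$.} By \cref{lem:size_bounds}, $\mathrm{Re}\, g = \log|\ZZ_\epsilon| \le n\log 2 + 3\beta\WW$ on the closed unit disc, extending by continuity of the polynomial. Also $|g(0)| \le 3\beta\WW$. Hence $\sup \mathrm{Re}\, g \le n + 3\beta\WW$ and $\mathrm{Re}\,g(0) \ge -3\beta\WW$.

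\textbf{Applying Borel--Carathéodory.} Use outer radius $R$ slightly less than $1$, letting $R \to 1$, and evaluate at $|z| = r$. This gives
\begin{align}
|g(z)| \le \frac{2r}{1-r}\sup_{|w|\le R}\mathrm{Re}\, g(w) + \frac{1+r}{1-r}|g(0)|
\le \frac{2}{1-r}\big(n + 3\beta\WW\big) + \frac{2}{1-r}\, 3\beta \WW .
\end{align}
The right-hand side is at most $\frac{2}{1-r}(n + 6\beta\WW) \le \frac{4C_\star}{1-r}$.

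\textbf{Bounding $1/(1-r)$.} Set $x = \epsilon h/2 \le 1$, using the hypothesis $\epsilon h \le 2$. Then $1 - e^{-x} \ge x(1 - x/2) \ge x/2$, so $1 - r \ge \epsilon h/4$. Therefore $|\Phi(\bm y)| = |g(r)| \le 16C_\star/(\epsilon h)$, which is the claimed bound. The constants have slack; this is the only place where $\epsilon h \le 2$ is used.

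\textbf{Main obstacle.} The key point is that Borel--Carathéodory needs only an upper bound on $\mathrm{Re}\,\Phi$, and this comes from the modulus bound $\sup|\ZZ_\epsilon| \le 2^n e^{3\beta\WW}$ of \cref{lem:size_bounds}. Together with control of $\Phi(\bm 0)$, also from \cref{lem:size_bounds}, this suffices. Lower bounds on $|\ZZ_\epsilon|$ near the boundary, which are unavailable, are not needed. The one-variable restriction is legitimate because the line $z\mapsto z\bm y/r$ stays inside the zero-free polydisc for all $|z|<1$.
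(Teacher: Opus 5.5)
Your proposal is correct and is essentially the paper's proof. The paper applies Borel--Carath\'eodory to the same one-variable restriction, written as $\zeta \mapsto \log \ZZ_\epsilon(\zeta e^{\epsilon h}\bm{y})$ with inner radius $e^{-\epsilon h}$ and outer radius $e^{-\epsilon h/2}$; this is your $g$ after the substitution $\zeta = e^{-\epsilon h/2} z$. It uses the same real-part bound $n\log 2 + 3\beta\WW$, the same center bound $3\beta\WW$, and the same estimate $1-e^{-\epsilon h/2}\ge \epsilon h/4$. Your limit $R\to 1$ handles the outer circle more carefully than the paper, which places the outer radius directly on the boundary of the analyticity disc.
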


\begin{proof}
    Write $e^{-\epsilon h/2} =: 1 - \mu$, so that $\mu \ge \epsilon h / 4$ because $\epsilon h \le 2$. By \cref{eq:size_bounds}, the real part of $\log \ZZ_\epsilon$ obeys $\mathrm{Re} \log \ZZ_\epsilon = \log |\ZZ_\epsilon| \le n \log 2 + 3\beta \WW$ on the \emph{whole closed unit polydisc}. For the imaginary part, the Borel-Carath\'eodory inequality uses an upper bound on the real part on a disc, together with the value at the center of the disc, and turns it into a bound on the modulus on a smaller disc. For a point $\bm{y}$ with $|y_l| \le e^{-\epsilon h/2}$ for all $l$, we apply the Borel-Carath\'eodory inequality to the function $\zeta \mapsto \log \ZZ_\epsilon(\zeta\, e^{\epsilon h}\, \bm{y})$ of one complex variable, analytic for $|\zeta| < e^{-\epsilon h/2}$ by \cref{lem:zero_free} because $|\zeta\, e^{\epsilon h}\, y_l| < 1$ there. The outer radius is $e^{-\epsilon h/2}$, on which the real-part bound above applies, the inner radius is $e^{-\epsilon h}$, at which the function takes the value $\log \ZZ_\epsilon(\bm{y})$, and the center value $\log \ZZ_\epsilon(\bm{0})$ has modulus at most $3 \beta \WW$ by \cref{eq:size_bounds}, so we obtain
    \begin{align}
        \sup_{\forall|y_l|\leq e^{-\epsilon h/2}} \left| \log \ZZ_\epsilon(\bm{y}) \right| &\le \frac{2\, e^{-\epsilon h} }{e^{-\epsilon h/2}  - e^{-\epsilon h} } \left( n \log 2 + 3 \beta \WW\right) + \frac{e^{-\epsilon h/2}  + e^{-\epsilon h} }{e^{-\epsilon h/2}  - e^{-\epsilon h} } \cdot 3 \beta \WW \nonumber\\[6pt]
        &\le\; \frac{2\, ( n \log 2 + 6 \beta \WW )}{\mu} \;\le\; \frac{4\, C_\star}{\mu} \;\le\; \frac{16\, C_\star}{\epsilon h},
    \end{align}
    where the second inequality uses $e^{-\epsilon h/2} - e^{-\epsilon h} = e^{-\epsilon h/2}\, \mu$, and the last uses $\mu \ge \epsilon h / 4$.
\end{proof}

\begin{proposition}[Decay of the Trotterized correlation function]
    \label{prop:fixed_eps}
    Suppose $\epsilon h \le 2$. Then for all $1 \le k \le M-1$,
    \begin{align}
    \label{eq:fixed_eps_bound}
        \left| f_{S,\epsilon}(k) \right| \;\le\; \frac{ 128\, C_\star }{ \epsilon h }\; e^{-\epsilon h\, d_k},
    \end{align}
    where $d_k = \min(k,\, M - k) + 1$ and $C_\star = \max\{ n,\, 6 \beta \WW \}$ is as in \cref{lem:logZ_modulus}.
\end{proposition}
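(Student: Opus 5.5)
We bound $u_k$ from \cref{lem:correlator_reduction} through the Taylor series of $\Phi = \log \ZZ_\epsilon$, and then pass to $f_{S,\epsilon}(k)$.

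\emph{Step 1: only monomials touching both layers survive.} Write $\Phi(\bm{y}) = \sum_{\bm{q}} a_{\bm{q}} \bm{y}^{\bm{q}}$. Under $\sigma_m$, a monomial picks up the sign $(-1)^{q_{S,m}}$, where $q_{S,m} := \sum_{i\in S} q_{i,m}$. The combination in \cref{eq:u_k_def} therefore multiplies $\bm{y}^{\bm{q}}$ by
\begin{align}
(-1)^{q_{S,0}+q_{S,k}} + 1 - (-1)^{q_{S,0}} - (-1)^{q_{S,k}} = \big(1-(-1)^{q_{S,0}}\big)\big(1-(-1)^{q_{S,k}}\big).
\end{align}
This factor is nonzero (and then equals $4$) only if both $q_{S,0}$ and $q_{S,k}$ are odd. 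In that case $\mathrm{supp}(\bm{q})$ meets layers $0$ and $k$, and \cref{lem:taylor_support} forces $|\bm{q}|\ge d_k$. Hence
\begin{align}
|u_k| \le 4 \sum_{|\bm{q}|\ge d_k} |a_{\bm{q}}|\, y_\star^{|\bm{q}|}.
\end{align}

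\emph{Step 2: Cauchy estimates on the polydisc of radius $r = e^{-\epsilon h/2}$.} By \cref{lem:logZ_modulus}, each coefficient satisfies $|a_{\bm{q}}| \le B\, r^{-|\bm{q}|}$ with $B = 16C_\star/(\epsilon h)$. Summing termwise would introduce a factor counting the monomials of each degree, which grows with $nM$, so I would avoid that. Instead I would use the one-variable restriction $g(\zeta) := \Phi(\zeta \bm{v})$, where $\bm{v}$ ranges over the four reflected points divided by $y_\star$, so that $|v_l| = 1$.

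The degree-$d$ homogeneous part of $\Phi$ evaluated at $\bm{v}$ is the $d$-th Taylor coefficient of $g$. Since $g$ is analytic with $|g|\le B$ on $|\zeta| \le r$, this part has modulus at most $B r^{-d}$. I therefore plan to apply the sign-flip identity degree by degree. The degree-$d$ part of $u_k$ is the combination of the four evaluations of the degree-$d$ homogeneous polynomial $P_d$ at $y_\star \times$ (point). Its modulus is at most $4 B r^{-d} y_\star^{d} = 4B e^{-3\epsilon h d/2}$, and it vanishes for $d < d_k$ by Step 1.

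Summing over $d \ge d_k$ gives
\begin{align}
|u_k| \le \frac{4B\, e^{-3\epsilon h d_k/2}}{1-e^{-3\epsilon h/2}}.
\end{align}
Using $\epsilon h\le 2$, we have $1-e^{-3\epsilon h/2} \ge c\,\epsilon h$, which yields
\begin{align}
|u_k| \lesssim \frac{C_\star}{(\epsilon h)^2}\, e^{-3\epsilon h d_k/2}.
\end{align}

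\emph{Step 3: conversion to $f_{S,\epsilon}(k)$, which I expect to be the main obstacle.} The difficulty is matching the claimed prefactor $128C_\star/(\epsilon h)$ with rate $e^{-\epsilon h d_k}$: Step 2 carries an extra $1/(\epsilon h)$. The spare decay $e^{-\epsilon h d_k/2}$ has to absorb that factor, with constants tracked. The degree sum is geometric, $\sum_{d\ge d_k} e^{-3\epsilon h d/2}$, and I would trade it against the excess rate, using $\sup_x x e^{-x/2}$-type estimates with $x = \epsilon h d_k$. Since $d_k \ge 1$, this leaves a bound of the form $\mathrm{const}\cdot C_\star (\epsilon h)^{-1} e^{-\epsilon h d_k}$.

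Finally, split into two cases:
\begin{itemize}
\item If $|u_k|\le 1$, \cref{lem:correlator_reduction} gives $|f_{S,\epsilon}(k)|\le 2|u_k|$.
\item Otherwise $|f_{S,\epsilon}(k)|\le 2 \le 2|u_k|$, with the right-hand side of \cref{eq:fixed_eps_bound} at least $2$ whenever $|u_k| > 1$.
\end{itemize}
In either case $|f_{S,\epsilon}(k)| \le 2|u_k|$, and the bound from Step 3 gives \cref{eq:fixed_eps_bound}. The remaining work is bookkeeping to land at the constant $128$.
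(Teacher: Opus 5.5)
\textbf{There is a genuine gap in Step 3.} Your Steps 1 and 2 are correct, but Step 3 does not go through, and the problem is not bookkeeping. Summing the per-degree Cauchy bounds produces a factor $1/(1-e^{-3\epsilon h/2})\approx 2/(3\epsilon h)$. The spare decay $e^{-\epsilon h d_k/2}$ cannot absorb this factor uniformly.

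Write $s:=\epsilon h$ and $X:=\epsilon h\,d_k$. Then $s^{-1}e^{-X/2}=d_k\,X^{-1}e^{-X/2}$, so a $\sup_X Xe^{-X/2}$ estimate trades the bad factor for $d_k$, not for a constant. The ratio is unbounded as $s\to 0$ at fixed $X$.

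The case split with $|f_{S,\epsilon}(k)|\le 2$ does not rescue this either. Your bound $|u_k|\le \frac{64C_\star}{s}\cdot\frac{e^{-3X/2}}{1-e^{-3s/2}}$ implies $|u_k|\le \frac{64C_\star}{s}e^{-X}$ only when $e^{-X/2}\le 1-e^{-3s/2}\approx \tfrac32 s$, i.e.\ $X\gtrsim 2\log\frac{2}{3s}$. The trivial bound covers only $X\le \log\frac{64C_\star}{s}$. For example, with $C_\star=1$ and $s=10^{-4}$, roughly the range $13.4<X<17.6$ is covered by neither, and this window widens like $\log(1/s)$ as $s\to 0$.

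What your argument actually proves is $|f_{S,\epsilon}(k)|\lesssim C_\star(\epsilon h)^{-2}e^{-3\epsilon h d_k/2}$. That weaker estimate would still give the rate $h/2$ in \cref{thm:Z_decay}, with a larger $K$, but it is not the stated proposition.

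\textbf{The missing idea.} Do not split into degrees at all. Apply the maximum modulus principle to the whole four-term combination, as the paper does. Set
\[
G(\zeta):=\Phi(\zeta\,\sigma_0\sigma_k\bm{1})+\Phi(\zeta\,\bm{1})-\Phi(\zeta\,\sigma_0\bm{1})-\Phi(\zeta\,\sigma_k\bm{1}),
\]
which is analytic on $|\zeta|<1$ and satisfies $G(y_\star)=u_k$.
\begin{itemize}
\item Your Step 1 shows that the Taylor series of $G$ starts at degree $d_k$, so $G(\zeta)/\zeta^{d_k}$ is analytic on $|\zeta|<1$.
\item For any $r$ with $y_\star\le r\le e^{-\epsilon h/2}$, \cref{lem:logZ_modulus} gives $|G|\le 4B$ on $|\zeta|=r$.
\item Hence $|u_k|\le 4B\,(y_\star/r)^{d_k}$ in a single step, with no geometric series.
\end{itemize}
The paper takes $r=e^{-\epsilon h}$, equivalently $|\lambda|=e^{\epsilon h}$ along $\bm{y}=\lambda\bm{y}_\star$. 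This gives $|u_k|\le \frac{64C_\star}{\epsilon h}e^{-\epsilon h d_k}$, and your final case split then yields the statement with the constant $128$. Your choice $r=e^{-\epsilon h/2}$ would even give the faster decay $e^{-3\epsilon h d_k/2}$.
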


\begin{proof}
    By \cref{lem:correlator_reduction} it suffices to bound $u_k$, since that lemma gives $|f_{S,\epsilon}(k)| \le 2$ and $|f_{S,\epsilon}(k)| \le 2\, |u_k|$ when $|u_k| \le 1$. Along the ray $\bm{y} = \lambda \bm{y}_\star$ set
    \begin{align}
        g(\lambda) := \big[ \Phi(\sigma_0 \sigma_k \bm{y}) + \Phi(\bm{y}) - \Phi(\sigma_0 \bm{y}) - \Phi(\sigma_k \bm{y}) \big]_{\bm{y} = \lambda \bm{y}_\star} ,
    \end{align}
    with $\Phi = \log \ZZ_\epsilon$ and its Taylor series $\Phi(\bm{y}) = \sum_{\bm{q}} a_{\bm{q}}\, \bm{y}^{\bm{q}}$ on the polydisc of \cref{lem:correlator_reduction}, so that $u_k = g(1)$.
    We bound $u_k$ in two steps. First, by \cref{lem:taylor_support}, $g$ has a zero of order $d_k$ at the origin, so the maximum modulus principle on the circle $|\lambda| = e^{\epsilon h}$ turns this zero into the decay factor $e^{-\epsilon h\, d_k}$. Second, the modulus of $g$ on that circle is bounded by \cref{lem:logZ_modulus}.

    More precisely, a reflection $\sigma_m$ multiplies the monomial $\bm{y}^{\bm{q}}$ by $(-1)^{e_m(\bm{q})}$, where $e_m(\bm{q}) := \sum_{i \in S} q_{i,m}$ is the degree of $\bm{q}$ on the locations $S \times \{m\}$. Hence the coefficient of $\bm{y}^{\bm{q}}$ in $g$ is $a_{\bm{q}} \big( 1 - (-1)^{e_0(\bm{q})} \big)\big( 1 - (-1)^{e_k(\bm{q})} \big)$, which vanishes unless $e_0(\bm{q})$ and $e_k(\bm{q})$ are both odd. When both are odd, $\mathrm{supp}(\bm{q})$ contains a location in the layer $0$ and one in the layer $k$, so $|\bm{q}| \ge d_k$ by \cref{lem:taylor_support}. Substituting $\bm{y} = \lambda \bm{y}_\star$,
    \begin{align}
        g(\lambda) = \sum_{\bm{q}\, :\; e_0(\bm{q}),\, e_k(\bm{q})\ \mathrm{odd}} 4\, a_{\bm{q}}\, y_\star^{\, |\bm{q}|}\, \lambda^{|\bm{q}|} ,
    \end{align}
    a power series in $\lambda$ whose every exponent is at least $d_k$. Thus $\tilde{g}(\lambda) := \lambda^{-d_k} g(\lambda)$ is analytic on $|\lambda| < e^{2\epsilon h}$ since every reflected point has coordinates of modulus $|\lambda y_\star| < 1$. The maximum modulus principle on $|\lambda| = r := e^{\epsilon h}$ gives
    \begin{align}
    \label{eq:u_1_bound}
        |u_k| = |g(1)| = |\tilde{g}(1)| \;\le\; r^{-d_k} \max_{|\lambda| = r} |g(\lambda)| \;=\; e^{-\epsilon h\, d_k} \max_{|\lambda| = r} |g(\lambda)| .
    \end{align}

    It remains to bound $\max_{|\lambda| = r} |g(\lambda)|$. On $|\lambda| = r$ every coordinate of the four evaluation points has modulus $|\lambda y_\star| = e^{-\epsilon h} \le e^{-\epsilon h/2}$, so each of the four terms of $g$ is at most $\sup\{ |\log \ZZ_\epsilon(\bm{y})| : |y_l| \le e^{-\epsilon h/2}\ \forall l \}$ in modulus. By \cref{lem:logZ_modulus} this supremum is at most $16\, C_\star/(\epsilon h)$, so $\max_{|\lambda| = r} |g(\lambda)| \le 64\, C_\star / (\epsilon h)$, and \cref{eq:u_1_bound} gives
    \begin{align}
    \label{eq:g_bound}
        |u_k| \;\le\; B := \frac{64\, C_\star}{\epsilon h}\, e^{-\epsilon h\, d_k} .
    \end{align}

    Returning to the correlator, if $B \le 1$ then $|u_k| \le B \le 1$, and \cref{lem:correlator_reduction} gives $|f_{S,\epsilon}(k)| \le 2\, |u_k| \le 2B$. If $B > 1$ then $|f_{S,\epsilon}(k)| \le 2 < 2B$. Either way $|f_{S,\epsilon}(k)| \le 2B$, which is \cref{eq:fixed_eps_bound}.
\end{proof}

\subsection{From Trotterized to exact correlation function}

We now compare the Trotterized correlation function $f_{S,\epsilon}(k)$ with the exact correlation function $f_S(t)$. This comparison is non-trivial because of the $(\epsilon h)^{-1}$ prefactor in the decay bound of \cref{prop:fixed_eps}. Due to this, taking $\epsilon \rightarrow 0$ renders the bound useless. Therefore, we need two additional ingredients: an additive error bound on the correlation function due to Trotterization, which \cref{lem:trotter_compare} gives, and an adaptive $\epsilon$ choice that shrinks exponentially with the separation $\min(t, \beta-t)$, carried out in \cref{thm:Z_decay}.

\begin{lemma}[Trotter comparison]
\label{lem:trotter_compare}
With $C_T := 48 \left( \beta \Gamma^2 + \WW \right)$, for all integers $M \ge 2$ and $\epsilon := \beta/M$ with $\epsilon \Gamma \le \tfrac12$ and $C_T\, \epsilon \le 1$, and all integers $1 \le k \le M-1$,
\begin{align}
    \left| f_S(k \epsilon) - f_{S,\epsilon}(k) \right| \;\le\; C_T\, \epsilon.
\end{align}
\end{lemma}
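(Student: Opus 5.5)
We compare the Trotterized and exact traces term by term, and then pass the errors through the ratios that define $f_S$ and $f_{S,\epsilon}$.

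\textbf{Step 1: rewrite the exact correlator as traces.} Write $G := e^{-\epsilon H_h}$. Since $t = k\epsilon$ and $\beta = M\epsilon$, we have
\begin{align}
    f_S(k\epsilon) = \frac{\mathrm{Tr}[Z_S G^{k} Z_S G^{M-k}]}{\mathrm{Tr}[G^M]} - \left(\frac{\mathrm{Tr}[Z_S G^M]}{\mathrm{Tr}[G^M]}\right)^2 .
\end{align}
We rewrite $f_{S,\epsilon}(k)$ in the same way. By cyclicity, the insertion $\mathcal{S}^{(S)}_\epsilon = \mathcal{S}_\epsilon' Z_S$-type (with $Z_S$ commuting with $e^{\epsilon h\sum Z}$) can be written as $L e^{\epsilon h\sum Z} Z_S L^\dagger$. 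Conjugating every factor by $L$ gives $\mathcal{S}_\epsilon \simeq T := e^{\epsilon h \sum Z} L^\dagger L$. The insertions become $Z_S$ placed between $L^\dagger L$ and the field factor. Thus both correlators have the form $N_2/D - (N_1/D)^2$. The only differences are that $G$ is replaced by $T$, and that the $Z_S$ insertion sits at a slightly shifted position inside a slice.

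\textbf{Step 2: single-step error.} I will show $\|T - G\| \le c\,\epsilon^2(\Gamma^2 + \ldots)$, or at least $\|T - G\| \le c\,\epsilon^2\Gamma^2$. This is a second-order Trotter/BCH estimate. Note that $L^\dagger L = \prod e^{-\epsilon H_{ij}/2}$ reversed times forward, which is a symmetric product approximating $e^{-\epsilon H}$ to second order. In practice a first-order bound $\|T - G\| \le \epsilon^2\Gamma^2 e^{\epsilon\Gamma}$ suffices, and this is the bound I will prove in the appendix-style lemma.

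The $\WW$ term in $C_T$ comes from the position of the insertion. In the Trotterized version $Z_S$ sits between $L^\dagger L$ and $e^{\epsilon h\sum Z}$; moving it across one factor costs $\|[Z_S, L]\| = O(\epsilon\WW)$. Since $Z_S$ commutes with the field, only the $XX$/$YY$ parts matter.

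\textbf{Step 3: telescoping with relative norms.} Products of $M$ factors cannot be compared in absolute norm, because $\|G\|^M = e^{-\beta E_0}$ is huge. So I will normalize by $e^{\epsilon E_0}$, or better, compare traces relative to $\mathrm{Tr}[G^M]$. Write $T = G + \Delta$. Expanding $\mathrm{Tr}[\cdots]$ multilinearly, each term in which $\Delta$ appears $r$ times is bounded by $\binom{M}{r}\|\Delta\|^r e^{-(M-r)\epsilon E_0}$. I then need to bound this relative to $\mathrm{Tr}[G^M] \ge e^{-\beta E_0}$. This gives a relative error of at most $(1 + e^{\epsilon E_0}\|\Delta\|)^M - 1 \approx M\epsilon^2\Gamma^2 e^{O(\epsilon\Gamma)} = \beta\Gamma^2\epsilon \cdot O(1)$. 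The condition $C_T\epsilon \le 1$ keeps the exponent $\le 1$, so $e^x - 1 \le 2x$.

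There is a subtlety here: the operator norm of $G$ relative to the trace. I will use $|\mathrm{Tr}[X_1 \cdots X_M]| \le \mathrm{Tr}|G^M|$-type Hölder bounds. These hold because each factor is $G$ or a bounded perturbation of it, and $\|G^{a}\Delta G^{b}\|_1 \le \|\Delta\|\,\|G\|^{-1}\ldots$. The cleanest route is to write $\Delta = G^{1/2}\tilde\Delta G^{1/2}$ with $\|\tilde\Delta\| = O(\epsilon^2\Gamma^2)$, using $\|G^{-1/2}\| \le e^{\epsilon\Gamma/2}$. Then Hölder on $\mathrm{Tr}[G^{a_1}\tilde\Delta_1 G^{a_2}\cdots]$ gives $\le \prod\|\tilde\Delta_i\|\,\mathrm{Tr}[G^M]$.

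\textbf{Step 4: propagate to $f$.} Each ratio $N/D$ changes by $O(C_T\epsilon)$. The moments are bounded by $1$, so the square term and the difference add only constant factors. This yields $|f_S(k\epsilon) - f_{S,\epsilon}(k)| \le C_T\epsilon$ with the constant $48$.

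\textbf{Main obstacle.} The key difficulty is the Hölder-type relative trace bound in Step 3, which is uniform in $\beta$ and $E_0$. The step I will try first is to control it by the factorization $\Delta = G^{1/2}\tilde\Delta G^{1/2}$ together with $\epsilon\Gamma \le \tfrac12$.
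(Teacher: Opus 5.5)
Your proof is correct, and it is organized differently from the paper's.

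\textbf{The paper's route.} The paper keeps the Hermitian step $\mathcal{S}_\epsilon = L D L^\dagger$, with $D = e^{\epsilon h \sum_j Z_j}$, and writes $\mathcal{S}^{(S)}_\epsilon = \widetilde{Z}_S\, \mathcal{S}_\epsilon$ with $\widetilde{Z}_S = L Z_S L^{-1}$. It then pays three separate costs:
\begin{itemize}
\item the insertion error $\|\widetilde{Z}_S - Z_S\| \le 5\epsilon\WW$, which is the only source of $\WW$ in $C_T$;
\item a telescoping of $\mathcal{S}_\epsilon$ into $e^{-\epsilon H_h}$, using H\"older in powers of both steps;
\item a Weyl-inequality trace-ratio lemma for the denominators, which needs both steps Hermitian and bounded below by $e^{-\epsilon\Gamma}$.
\end{itemize}

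\textbf{Your route.} You conjugate by $L$ instead. Then $f_{S,\epsilon}(k)$ is \emph{exactly} $\mathrm{Tr}[Z_S T^k Z_S T^{M-k}]/\mathrm{Tr}[T^M] - (\mathrm{Tr}[Z_S T^M]/\mathrm{Tr}[T^M])^2$, with your non-Hermitian $T = D L^\dagger L$. You absorb the non-Hermiticity by writing $T = G^{1/2}(I+\tilde\Delta)G^{1/2}$, where $\|\tilde\Delta\| \le \|G^{-1}\|\,\|T-G\| \le e^{2\epsilon\Gamma}(\epsilon\Gamma)^2$. A single multilinear expansion, with H\"older applied only to powers of the positive $G$, then bounds the numerators and the denominator at once by $\big((1+\|\tilde\Delta\|)^M - 1\big)\mathrm{Tr}[G^M]$. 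Here $M\|\tilde\Delta\| \le e\,\beta\Gamma^2\epsilon \le e/48$.

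\textbf{Comparison.} Your route is shorter. It needs neither the Weyl lemma nor bounds on the Trotterized moments. It gives a bound below $20\,\beta\Gamma^2\epsilon$ with no $\WW$ term at all. The paper's route keeps every operator it manipulates Hermitian, at the price of the insertion error.

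\textbf{Two points to correct in your write-up.}
\begin{itemize}
\item In your frame there is no ``shifted insertion.'' The slice boundary between $L^\dagger L$ and $D$ is exactly where the exact correlator puts $Z_S$ between factors of $G$. So the $[Z_S, L]$ step is unnecessary, and the $\WW$ in $C_T$ is pure slack for you.
\item The first version of your Step 3 does not work. It bounds terms by the operator norms $\|\Delta\|^r\|G\|^{M-r}$ and compares with $\mathrm{Tr}[G^M]\ge e^{-\beta E_0}$. Bounding a trace by an operator norm costs a factor $2^n$, which $C_T$ cannot absorb. Only the H\"older version with the factorization is valid. That is the version to present, with the constants written out explicitly.
\end{itemize}
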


We prove \cref{lem:trotter_compare} in \cref{app:trotter_comparison}.

\begin{theorem}[Decay of two-point correlations of products]
\label{thm:Z_decay}
Let $h > 0$, $\beta > 0$, and $S \subseteq [n]$ nonempty. With $C_\star = \max\{ n,\, 6 \beta \WW \}$ of \cref{prop:fixed_eps} and $C_T = 48\, ( \beta \Gamma^2 + \WW )$ of \cref{lem:trotter_compare}, set
\begin{align}
\label{eq:G_K_def}
    G := 2 \max\{ C_T,\, \Gamma,\, h,\, 1 \} \qquad \text{and} \qquad K := \frac{256\, C_\star\, G}{h}+2.
\end{align}
Then for all $t \in [0, \beta]$,
\begin{align}
\label{eq:Z_decay}
    \left| f_S(t) \right| \;\le\; K\, e^{- \frac{h}{2} \min(t,\, \beta - t)}.
\end{align}
\end{theorem}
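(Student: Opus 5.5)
We combine \cref{prop:fixed_eps} with \cref{lem:trotter_compare}, choosing the Trotter step adaptively in the separation. By the symmetry $t \mapsto \beta - t$ (cyclicity of the trace), write $s := \min(t, \beta - t)$ and assume $t \le \beta/2$, so $s = t$.

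\textbf{Small separations.} First, $|f_S(t)| \le 2$ always, since both moments are expectations of an operator of norm one. So whenever $K e^{-hs/2} \ge 2$ there is nothing to prove. In particular we may assume $e^{hs/2} > K/2 \ge 128 C_\star G/h$. This means $s$ is large enough that the $\epsilon$ chosen below satisfies all the constraints.

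\textbf{Choice of step.} For $0 < t < \beta$, pick an integer $M$ with $\epsilon = \beta/M \le \epsilon_0 := e^{-hs/2}/G$ and $t/\epsilon$ an integer $k$. This requires some care, because $t$ need not be a rational multiple of $\beta$. I would handle it in one of two ways. One option is to use continuity of $f_S$ in $t$: prove the bound on the dense set $\{k\beta/M\}$ and pass to the limit. Alternatively, take $M$ large and $k = \lfloor t/\epsilon \rfloor$, then bound $|f_S(t) - f_S(k\epsilon)|$ by a derivative estimate of order $\Gamma \epsilon$. The first option is simplest: the right-hand side of \cref{eq:Z_decay} is continuous, so it suffices to prove it for $t = k\beta/M$ with $M$ arbitrarily large. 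We then take $M$ as the smallest such multiple making $\epsilon \le \epsilon_0$.

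Since $G \ge 2\max\{C_T, \Gamma, h\}$, this $\epsilon$ satisfies $\epsilon\Gamma \le 1/2$, $C_T\epsilon \le 1/2$, and $\epsilon h \le 1/2 \le 2$. Hence both \cref{lem:trotter_compare} and \cref{prop:fixed_eps} apply, and
\begin{align*}
|f_S(t)| \le C_T\epsilon + \frac{128 C_\star}{\epsilon h} e^{-\epsilon h d_k}.
\end{align*}
The first term is at most $\tfrac12 e^{-hs/2}$. For the second, note that $\epsilon d_k \ge \epsilon \min(k, M-k) = s$, so $e^{-\epsilon h d_k} \le e^{-hs}$.

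\textbf{Main obstacle.} The prefactor $1/\epsilon$ must be controlled. We need the lower bound $\epsilon \ge \epsilon_0/2$, or at least comparable to $\epsilon_0$ up to a constant. This holds if $M$ is chosen as the smallest admissible integer, as long as $\epsilon_0 \le \beta$, which follows from the large-$s$ assumption. In the continuity argument, however, $M$ is forced to be a multiple, and then $\epsilon$ could be much smaller than $\epsilon_0$. So I would instead use the second option: take $M = \lceil \beta/\epsilon_0 \rceil$, so $\epsilon \in [\epsilon_0/2, \epsilon_0]$, and $k = \lfloor t/\epsilon \rfloor$. We then bound $|f_S(t) - f_S(k\epsilon)| \le 4\Gamma\epsilon$ via $\frac{d}{dt} e^{-tH_h} Z_S e^{tH_h} = -[H_h, \cdot]$. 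For $t \le \beta/2$ the imaginary-time factors remain bounded, since $\rho\, e^{-tH_h}$ and $e^{tH_h}$ combine into normalized traces. We also lose at most one layer in $d_k$, costing a factor $e^{\epsilon h} \le e^{1/2}$.

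With $\epsilon \ge \epsilon_0/2$, the second term is at most
\begin{align*}
\frac{256 C_\star G}{h} e^{hs/2} e^{-hs} e^{1/2} \lesssim \frac{256 C_\star G}{h} e^{-hs/2},
\end{align*}
and the Trotter and discretization errors are $O(e^{-hs/2})$. Summing the pieces gives $|f_S(t)| \le K e^{-hs/2}$; the constants need rechecking, possibly absorbing a factor like $e^{1/2}$ into the $+2$ slack or adjusting $G$.
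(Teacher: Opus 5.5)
Your approach is essentially the paper's. It uses the same adaptive choice $M=\lceil \beta/\epsilon_0\rceil=\lceil \beta G e^{hs/2}\rceil$, the same window $\epsilon\in[\epsilon_0/2,\epsilon_0]$, and the same three-term split into discretization, Trotter, and Trotterized-decay errors. The paper rounds $t/\epsilon$ to the nearest integer instead of taking the floor, and disposes of small separations via $\tau<1/G$ instead of $Ke^{-hs/2}\ge 2$; these differences are inessential.

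The constants do close to exactly the stated $K$, but not by the patch you suggest. A multiplicative $e^{1/2}$ on the term $256C_\star G/h$ cannot be absorbed into the additive $+2$. In fact there is no loss to absorb:
\begin{itemize}
\item With $k=\lfloor t/\epsilon\rfloor$ and $t\le\beta/2$ you have $\epsilon k\ge s-\epsilon$. The $+1$ in $d_k=\min(k,M-k)+1$ then gives $\epsilon d_k\ge s$ exactly. You dropped this $+1$ earlier when you wrote $\epsilon d_k\ge\epsilon\min(k,M-k)$.
\item The derivative bound is $|f_S'|\le\|Z_S\|\,\|[Z_S,H_h]\|\le 2\Gamma$, not $4\Gamma$. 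So the discretization error is at most $2\Gamma\epsilon\le e^{-hs/2}$.
\end{itemize}
Together with the Trotter error $C_T\epsilon\le\tfrac12 e^{-hs/2}$, the total is $(256C_\star G/h+\tfrac32)e^{-hs/2}\le Ke^{-hs/2}$.

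One small correction to your justification of $|f_S|\le 2$. The two-point moment is not an expectation of a norm-one operator, since $e^{-tH_h}Z_Se^{tH_h}$ is not unitary conjugation. Its bound by $1$, for every $t\in[0,\beta]$, comes from the tracial H\"older inequality (\cref{lem:moment_bounds}). The same inequality is what justifies the $2\Gamma$ derivative bound.
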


\begin{proof}
    Throughout we write $\tau := \min(t,\, \beta - t) \in [0, \beta/2]$. For any Trotter step $\epsilon = \beta/M$ and any layer $k$, the triangle inequality gives the chain
    \begin{align}
        \label{eq:three_terms}
        |f_S(t)| \;\le\; \big| f_S(t) - f_S(k \epsilon) \big| \;+\; \big| f_S(k \epsilon) - f_{S,\epsilon}(k) \big| \;+\; \big| f_{S,\epsilon}(k) \big|.
    \end{align}
    We choose the number of Trotter steps adaptively, so that the step size $\epsilon$ is of the order $e^{-h\tau/2}/G$, and $k$ as the integer closest to $t/\epsilon$. The first two terms of \cref{eq:three_terms} are then exponentially small in $\tau$ because they are proportional to $\epsilon$, and the third is as well because the decay $e^{-\epsilon h\, d_k}$ of \cref{prop:fixed_eps} at the rate $h$ is only partially offset by the prefactor $(\epsilon h)^{-1}$, which grows like $e^{h\tau/2}$ under the above choice of $\epsilon$, leaving the overall decay rate at $h/2$.

    If $\tau < 1/G$, then the claim is true for the following reason. By \cref{lem:moment_bounds}, $|\langle Z_S(0)\, Z_S(t) \rangle_\beta| \le 1$ and $|\langle Z_S \rangle_\beta| \le 1$, hence $|f_S(t)| \le 2$. 
    By $G \geq 2h$, $C_\star \ge n \ge 1$, and $h\tau/2 \leq \frac{h}{2G} \leq 1/4$, we have $K \geq 3$ and $e^{-h\tau/2} \geq e^{-1/4} \geq 2/3$. This implies $|f_S(t)| \le 2 \leq K e^{-h\tau/2}$. We may therefore assume $\tau \ge 1/G$. 
    
    Choose the number of Trotter steps based on $\tau$ as
    \begin{align}
        M := \left\lceil \beta G\, e^{h\tau/2} \right\rceil,
    \end{align}
    and let $k$ be the integer closest to $t/\epsilon$, where $\epsilon = \beta/M$. The remainder of the proof uses three properties of these choices: the step size $\epsilon$ lies in the window
    \begin{align}
        \label{eq:eps_window}
        \frac{1}{2G}\, e^{-h\tau/2} \;\le\; \epsilon \;\le\; \frac{1}{G}\, e^{-h\tau/2},
    \end{align}
    the hypotheses of \cref{prop:fixed_eps,lem:trotter_compare} hold,
    \begin{align}
        \label{eq:hypotheses_check}
        M \ge 2, \qquad \epsilon h,\ \epsilon \Gamma,\ C_T\, \epsilon \;\le\; \tfrac12, \qquad 1 \le k \le M - 1,
    \end{align}
    and rounding from the time $t$ to $k \epsilon$ for an integer $k$ changes it by at most $\frac{\epsilon}{2}$,
    \begin{align}
        \label{eq:rounding_bound}
        | t - k \epsilon | \;\le\; \frac{\epsilon}{2} \qquad \text{and} \qquad \min\{ \epsilon k,\, \beta - \epsilon k \} \;\ge\; \tau - \frac{\epsilon}{2}.
    \end{align}
    Let us verify the first two properties, the third being immediate from the choice of $k$ as the integer closest to $t/\epsilon$. Since $\tau \le \beta/2$ and $\tau \ge 1/G$, we have $\beta G\, e^{h\tau/2} \ge \beta G \ge 2$, and the window bound \cref{eq:eps_window} follows from $\beta G\, e^{h\tau/2} \le M \le \beta G\, e^{h\tau/2} + 1 \le 2\, \beta G\, e^{h\tau/2}$. For \cref{eq:hypotheses_check}, the bound $M \ge \beta G\, e^{h\tau/2} \ge 2$ was just obtained, and the three smallness conditions follow from $\epsilon \le 1/G$ and the definition of $G$. The layer $k$ is interior because $\epsilon \le 1/G \le \tau \le t$ and $\beta - t \ge \tau \ge \epsilon$. Dividing by $\epsilon$, the ratio $t/\epsilon$ lies in $[1, M-1]$, and so does its closest integer $k$.

    The three terms of \cref{eq:three_terms} are now bounded in turn. For the first term, differentiating $Z_S(s) = e^{-s H_h}\, Z_S\, e^{s H_h}$ gives $f_S'(s) = \braket{Z_S(0)\, [Z_S, H_h](s)}_\beta$, an imaginary-time two-point function with separation $0 \le s \le \beta$, so \cref{fact:tracial_holder} gives $|f_S'(s)| \le \| Z_S \|\, \| [Z_S, H_h] \| \le 2 \| H_h \| \le 2 \Gamma$. The mean value theorem with the rounding bound \cref{eq:rounding_bound}, followed by the window bound \cref{eq:eps_window} and $\Gamma \le G$, gives
    \begin{align}
        \big| f_S(t) - f_S(k \epsilon) \big| \;\le\; \frac{\epsilon}{2}\, \max_{0\leq s\leq\beta} |f_S'(s)| \;\le\; \epsilon\, \Gamma \;\le\; \frac{\Gamma}{G}\, e^{-h\tau/2} \;\le\; e^{-h\tau/2}. \label{eq:discritization_error_bound}
    \end{align}
    For the second term, \cref{eq:hypotheses_check} allows us to apply \cref{lem:trotter_compare}, and the window bound \cref{eq:eps_window} with $C_T \le G$ gives
    \begin{align}
        \big| f_S(k \epsilon) - f_{S,\epsilon}(k) \big| \;\le\; C_T\, \epsilon \;\le\; \frac{C_T}{G}\, e^{-h\tau/2} \;\le\; e^{-h\tau/2}. \label{eq:Trotter_error_bound}
    \end{align}
    For the third term, \cref{eq:hypotheses_check} allows us to apply \cref{prop:fixed_eps}, with $\epsilon\, d_k = \min\{ \epsilon k,\, \beta - \epsilon k \} + \epsilon \ge \tau$ where the inequality is by \cref{eq:rounding_bound}. The lower bound of \cref{eq:eps_window} gives $(\epsilon h)^{-1} \le \tfrac{2G}{h}\, e^{h\tau/2}$. Hence
    \begin{align}
        \big| f_{S,\epsilon}(k) \big| \;\le\; \frac{ 128\, C_\star }{\epsilon h}\; e^{-\epsilon h\, d_k} \;\le\; 128\, C_\star\, \frac{2G}{h}\, e^{h\tau/2} \cdot \, e^{-h \tau} \;=\; \frac{256\, C_\star\, G}{h}\; e^{-h\tau/2}, \label{eq:Trotterized_2p_correlation_function_decay_bound}
    \end{align}
    in which the growth $e^{h\tau/2}$ of the prefactor and the decay $e^{-h \tau}$ combine to the decay rate $h/2$ of the claim.
    Substituting \cref{eq:discritization_error_bound,eq:Trotter_error_bound,eq:Trotterized_2p_correlation_function_decay_bound} back into \cref{eq:three_terms}, we get
    \begin{align}
        |f_S(t)| \;\le\; \Big(\frac{256\, C_\star\, G}{h} + 2\Big)\, e^{-h\tau/2},
    \end{align}
    which is the claim.
\end{proof}

\section{Spectral gap from decay of correlations}
\label{sec:spectral_gap}

In this section we use the decay of the two-point correlations of the products $Z_S$ to prove that the ground state of $H_h$ is nondegenerate and that the spectral gap above it is at least $h/4$. The proof combines, for every nonempty $S \subseteq [n]$, the spectral lower bound of \cref{lem:lehmann} for the correlation $f_S(\beta/2)$ and the decay upper bound of \cref{thm:Z_decay}, and to let $\beta \to \infty$. The decay rate $h/2$ of \cref{thm:Z_decay} holds uniformly in the size of $S$, so the gap lower bound is independent of the system size.

\begin{theorem}[Spectral gap]
\label{thm:spectral_gap}
Let $H$ be a Lee-Yang Hamiltonian and $h > 0$. Then the ground state of the Hamiltonian $H_h$ of \cref{eq:LY_with_field_Hamiltonian} is nondegenerate, and
\begin{align}
\label{eq:spectral_gap}
    E_1 - E_0 \;\ge\; \frac{h}{4} .
\end{align}
\end{theorem}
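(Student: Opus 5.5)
We will combine a spectral (Lehmann) lower bound on $f_S(\beta/2)$ with the decay bound of \cref{thm:Z_decay}, and then let $\beta\to\infty$.

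\emph{Step 1: Lehmann representation.} Expand $f_S(t)$ in the eigenbasis $\{\ket{l}\}$ of $H_h$:
\begin{align}
    f_S(t) = \frac{1}{\ZZ}\sum_{l,l'} e^{-\beta E_l}\, e^{-t(E_{l'}-E_l)}\, |\langle l|Z_S|l'\rangle|^2 \;-\; \langle Z_S\rangle_\beta^2 .
\end{align}
All terms in the double sum are nonnegative. At $t=\beta/2$ the weight of the pair $(l,l')$ is $e^{-\beta(E_l+E_{l'})/2}$, which is symmetric in $l,l'$. This is the content I expect \cref{lem:lehmann} to supply.

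\emph{Step 2: the $\beta\to\infty$ limit of the lower bound.} Multiply by $e^{\beta E_1/2}\cdot e^{\beta E_0/2}$, or equivalently look at the leading exponential rate. Suppose some $S$ has $\langle g|Z_S|e\rangle\neq 0$ for a ground state $g$ and an excited state $e$ with energy $E_e>E_0$. Then $f_S(\beta/2)\gtrsim e^{-\beta(E_e-E_0)/2}|\langle g|Z_S|e\rangle|^2/g_0$, where $g_0$ is the ground degeneracy. The subtracted term $\langle Z_S\rangle_\beta^2$ must be handled carefully: it converges to $(\mathrm{Tr}_{P_0}Z_S/g_0)^2$, while the $E_0$-block diagonal part of the sum converges to $\sum_{l,l'\in\text{ground}}|\langle l|Z_S|l'\rangle|^2/g_0$.

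\emph{Step 3: choose $S$ to expose the gap.} By Cauchy–Schwarz, $\sum_{l,l'}|\langle l|Z_S|l'\rangle|^2/g_0 \ge (\mathrm{Tr} P_0Z_S/g_0)^2$, with equality iff $P_0Z_SP_0\propto P_0$.
\begin{itemize}
    \item If the ground space is degenerate, then $P_0\neq$ rank one. The diagonal operators $Z_S$ together with the identity span all diagonal matrices, so $P_0 Z_S P_0$ cannot be proportional to $P_0$ for every $S$. (If it were, every diagonal operator $D$ would satisfy $P_0DP_0\propto P_0$, which forces $P_0$ to be rank one; take $D$ the projector onto a basis state in the support of a ground vector.) Hence some $S$ has $\lim_{\beta\to\infty} f_S(\beta/2)>0$. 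But \cref{thm:Z_decay} gives $|f_S(\beta/2)|\le K e^{-h\beta/4}$ with $K$ polynomial in $\beta$, which tends to $0$. This contradiction proves nondegeneracy.
    \item With a unique ground state $\ket{0}$, we need a lower bound involving $E_1$. Pick $e=\ket{1}$ (any state in the $E_1$ eigenspace). Since $Z_S$ span the diagonal algebra and $\ket{0}\perp\ket{1}$, there is some diagonal operator, hence some $S$, with $\langle 0|Z_S|1\rangle\neq 0$. If all these vanished, $\langle 0|D|1\rangle=0$ for every diagonal $D$, so $\overline{\psi_0(x)}\psi_1(x)=0$ for all $x$. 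This is not immediately contradictory, so here I would use positivity of the ground state. A Perron–Frobenius-type statement is plausibly available from the zero-freeness; alternatively, use $D=$ diag projectors combined with the off-diagonal-free structure. This is the step I expect to be the main obstacle. A cleaner route is to avoid selecting $e$: use $\sum_S \sum_{l\ne 0}|\langle 0|Z_S|l\rangle|^2 = 2^n\sum_{l\neq 0}\sum_x |\psi_0(x)|^2|\psi_l(x)|^2$ and pick $l$ in the first excited eigenspace maximizing overlap. Some choice of $l$ in that eigenspace has nonzero pointwise overlap with $\psi_0$ unless the supports are disjoint, which I would rule out via the nonnegativity/support of $\psi_0$.
\end{itemize}

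\emph{Step 4: compare rates.} Once some $S$ has $c:=|\langle 0|Z_S|1\rangle|^2>0$, the connected part at $t=\beta/2$ satisfies $f_S(\beta/2)\ge c\,e^{-\beta(E_1-E_0)/2}/\ZZ_\star$. Here $\ZZ_\star\to1$, and the subtracted term cancels exactly against the $(0,0)$ term in the limit. For finite $\beta$ that term gives an error of order $e^{-\beta(E_1-E_0)}$, which is lower order. Comparing with $Ke^{-h\beta/4}$, take logarithms, divide by $\beta$, and let $\beta\to\infty$; the factor $K=\mathrm{poly}(\beta)$ contributes nothing. This gives $(E_1-E_0)/2\ge h/4$. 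That is weaker by a factor of two than needed, so instead evaluate at general $t$ rather than $\beta/2$. Take $t$ fixed and $\beta\to\infty$ first: then $f_S(t)\to\sum_{l\neq0}e^{-t(E_l-E_0)}|\langle0|Z_S|l\rangle|^2\ge c\,e^{-t(E_1-E_0)}$, and \cref{thm:Z_decay} gives $|f_S(t)|\le K(\beta)e^{-ht/2}$. Since $K$ grows with $\beta$, I would instead take $t=\beta/2$ and accept whatever constant results, using $\min(t,\beta-t)=\beta/2$ with rate $h/2$. This yields $E_1-E_0\ge h/2\cdot\frac12\cdot2$. The bookkeeping should give exactly $h/4$: the $\beta/2$ evaluation decays at rate $(E_1-E_0)/2$ in $\beta$, against $h/4$ from the theorem, so $E_1-E_0\ge h/2\ge h/4$. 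Either way the claimed bound $h/4$ follows.
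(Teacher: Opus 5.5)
Your nondegeneracy argument (first item of Step 3) is sound. In the centered form of \cref{lem:lehmann}, $\lim_{\beta\to\infty} f_S(\beta/2)=g_0^{-1}\big\|P_0Z_SP_0-g_0^{-1}\mathrm{Tr}(P_0Z_S)\,P_0\big\|_F^2$, and your rank-one observation finishes it.

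The gap bound has a genuine hole. It hinges on finding some $S$ with $\langle 0|Z_S|1\rangle\neq 0$, and that is false in general, so no positivity or Perron--Frobenius input can supply it. Take $H=0$, or any classical ferromagnet $H=-\sum_{ij}w^z_{ij}Z_iZ_j$ with $w^z_{ij}\ge 0$; both are Lee-Yang Hamiltonians. The unique ground state of $H_h$ is $\ket{0\cdots 0}$. It is nonnegative, and it is a $+1$ eigenvector of every $Z_S$. Hence $\langle 0|Z_S|l\rangle=\langle 0|l\rangle=0$ for every excited eigenvector $\ket{l}$ and every $S$, so the off-diagonal channel you rely on vanishes identically. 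For sign-problematic members of the class, positivity of $\psi_0$ is not even available. Your Step 4 bookkeeping is also backwards: a decay rate $(E_1-E_0)/2\ge h/4$ would give $E_1-E_0\ge h/2$, which is more than the theorem claims. That already signals that the off-diagonal term alone cannot carry the argument.

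The missing idea is to use the diagonal Lehmann terms of $\hat Z_S=Z_S-\langle Z_S\rangle_\beta$ as well. The $(1,1)$ term has weight $e^{-\beta(E_1-E_0)}/\ZZ_\star\ge 2^{-n}e^{-\beta(E_1-E_0)}$. So if $E_1-E_0<h/4$, comparison with $f_S(\beta/2)\le Ke^{-h\beta/4}$ ($K$ polynomial in $\beta$) forces $\langle 1|Z_S|1\rangle-\langle Z_S\rangle_\beta\to 0$. The $(0,0)$ and $(0,1)$ terms are handled the same way. Hence $\langle 1|Z_S|1\rangle=\langle 0|Z_S|0\rangle$ and $\langle 0|Z_S|1\rangle=0$ for all $S$. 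By linearity, taking $D=|x\rangle\langle x|$, this gives $|\psi_1(x)|^2=|\psi_0(x)|^2$ and $\overline{\psi_0(x)}\psi_1(x)=0$ for every $x$, which is impossible at any $x$ with $\psi_0(x)\neq 0$.

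This is exactly the paper's proof. The paper runs it on the span $V$ of eigenvectors with $E_l\le E_0+\delta$, $\delta<h/4$, which gives nondegeneracy and the gap in one stroke. The constant is $h/4$ rather than $h/2$ precisely because the $(1,1)$ term decays at the full rate $E_1-E_0$. In the example $H=0$, $S=\{i\}$, one has $f_S(\beta/2)=\mathrm{sech}^2(\beta h)\sim 4e^{-\beta(E_1-E_0)}$. All of this comes from excited--excited terms, with no contribution from any $(0,l)$, $l\neq 0$, term.
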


We use the following spectral representation of the correlation function with the separation $\beta/2$.

\begin{lemma}[Spectral representation]
\label{lem:lehmann}
For a Hermitian operator $A$, with $\hat{A} := A - \langle A \rangle_\beta$,
\begin{align}
\label{eq:lehmann_decomposition}
    \langle A(0)\, A(\beta/2) \rangle_\beta - \langle A \rangle_\beta^2 \;=\; \frac{1}{\ZZ} \sum_{l, l'} e^{-\beta (E_l + E_{l'})/2}\, \big| \langle l |\, \hat{A}\, | l' \rangle \big|^2 \;\ge\; 0 .
\end{align}
\end{lemma}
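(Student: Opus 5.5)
The plan is to expand everything in the eigenbasis $\{\ket{l}\}$ of $H_h$ and use cyclicity of the trace. Since $\rho = e^{-\beta H_h}/\ZZ$, the definition \cref{eq:two_point_def} with $t = \beta/2$ gives
\begin{align}
    \langle A(0)\, A(\beta/2) \rangle_\beta = \frac{1}{\ZZ} \mathrm{Tr}\left[ e^{-\beta H_h/2}\, A\, e^{-\beta H_h/2}\, A \right].
\end{align}

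The first step is to check that the constant shift is harmless: replacing $A$ by $\hat{A} = A - c$ with $c := \langle A \rangle_\beta$ changes the two-point function in a controlled way. Expanding $\mathrm{Tr}[e^{-\beta H_h/2} \hat{A} e^{-\beta H_h/2} \hat{A}]/\ZZ$ gives four terms: the $AA$ term, two cross terms, and the constant term. Each cross term equals $-c\,\mathrm{Tr}[e^{-\beta H_h} A]/\ZZ = -c^2$, after using cyclicity to merge the two half-exponentials. The constant term is $c^2\,\mathrm{Tr}[e^{-\beta H_h}]/\ZZ = c^2$. Hence the left-hand side of \cref{eq:lehmann_decomposition} equals $\mathrm{Tr}[e^{-\beta H_h/2} \hat{A} e^{-\beta H_h/2} \hat{A}]/\ZZ$.

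The second step inserts two resolutions of the identity $\sum_l \ketbra{l}{l}$ around each half-exponential. This yields
\begin{align}
    \frac{1}{\ZZ}\sum_{l,l'} e^{-\beta E_l/2} e^{-\beta E_{l'}/2} \langle l|\hat{A}|l'\rangle\langle l'|\hat{A}|l\rangle.
\end{align}
Since $A$ is Hermitian, $\langle A\rangle_\beta = \mathrm{Tr}[\rho A]$ is real, because $\rho$ is positive. So $\hat{A}$ is Hermitian, which gives $\langle l'|\hat{A}|l\rangle = \overline{\langle l|\hat{A}|l'\rangle}$. The product of the two matrix elements is therefore $|\langle l|\hat{A}|l'\rangle|^2$. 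Every term is nonnegative because the weights $e^{-\beta(E_l+E_{l'})/2}$ and $\ZZ$ are positive, and this gives the inequality.

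There is no real obstacle here. The only points needing care are that $\langle A\rangle_\beta$ is real, so that $\hat{A}$ stays Hermitian, and the cyclicity bookkeeping for the cross terms.
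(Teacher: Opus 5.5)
Your proposal is correct and follows essentially the same route as the paper: you first reduce to the centered correlator $\langle \hat{A}(0)\,\hat{A}(\beta/2)\rangle_\beta$ (by expanding the cross terms with cyclicity, where the paper instead notes $\langle A(t)\rangle_\beta=\langle A\rangle_\beta$), then expand in the eigenbasis of $H_h$ and use Hermiticity of $\hat{A}$. Your explicit check that $\langle A\rangle_\beta$ is real, so that $\hat{A}$ stays Hermitian, supplies a step the paper uses without stating.
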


\begin{proof}
    By the cyclicity of the trace, $\langle A(t) \rangle_\beta = \langle A \rangle_\beta$ for all $t$, so expanding the product gives
    \begin{align}
        \langle A(0)\, A(\beta/2) \rangle_\beta - \langle A \rangle_\beta^2 \;=\; \langle \hat{A}(0)\, \hat{A}(\beta/2) \rangle_\beta .
    \end{align}
    Expanding the trace in the eigenbasis of $H_h$,
    \begin{align}
        \langle \hat{A}(0)\, \hat{A}(\beta/2) \rangle_\beta = \frac{1}{\ZZ} \sum_{l, l'} e^{-\beta E_l}\, e^{-\frac{\beta}{2}(E_{l'} - E_l)}\, \langle l |\, \hat{A}\, | l' \rangle \langle l' |\, \hat{A}\, | l \rangle = \frac{1}{\ZZ} \sum_{l, l'} e^{-\beta(E_l + E_{l'})/2}\, \big| \langle l |\, \hat{A}\, | l' \rangle \big|^2.
    \end{align}
\end{proof}

\begin{proof}[Proof of \cref{thm:spectral_gap}]
    Fix $0 < \delta < \frac{h}{4}$ and let $V$ be the span of the eigenstates of $H_h$ with $E_l \le E_0 + \delta$. We prove $\dim V = 1$; since no eigenvalue then lies in $(E_0, E_0 + \delta]$ and $\delta < \frac{h}{4}$ is arbitrary, both claims follow.

    Fix a nonempty $S \subseteq [n]$ and abbreviate $\hat{Z}_S := Z_S - \langle Z_S \rangle_\beta$. By \cref{lem:lehmann} with $A = Z_S$,
    \begin{align}
    \label{eq:gap_decomposition}
        \big\langle Z_S(0)\, Z_S(\beta/2) \big\rangle_\beta - \big\langle Z_S \big\rangle_\beta^2 \;=\; \frac{1}{\ZZ} \sum_{l, l'} e^{-\beta (E_l + E_{l'})/2}\, \big| \langle l |\, \hat{Z}_S\, | l' \rangle \big|^2 ,
    \end{align}
    a sum of nonnegative terms. For eigenstates $l, l'$ in $V$ we have $(E_l + E_{l'})/2 \le E_0 + \delta$, so the corresponding weight satisfies
    \begin{align}
        \frac{ e^{-\beta(E_l + E_{l'})/2} }{ \ZZ } \;\ge\; \frac{ e^{-\beta\delta} }{ \ZZ_\star } \;\ge\; 2^{-n}\, e^{-\beta\delta} .
    \end{align}
    Keeping such single term of \cref{eq:gap_decomposition}, whose left-hand side is $f_S(\beta/2)$, and applying \cref{thm:Z_decay} at $t = \beta/2$, which gives
    \begin{align}
        f_S(\beta/2) \;\le\; K\, e^{-h\beta/4},
    \end{align}
    we obtain
    \begin{align}
        \big| \langle l |\, \hat{Z}_S\, | l' \rangle \big|^2 \;\le\; 2^n\, K\; e^{-\beta \left( \frac{h}{4} - \delta \right)} \qquad \text{for all } l, l' \text{ in } V .
    \end{align}
    The exponent is strictly negative because $\delta < \frac{h}{4}$, and the eigenstates and eigenvalues of $H_h$ do not depend on $\beta$, so letting $\beta \to \infty$ makes the right-hand side vanish. Note that $K$ only depends on $\beta$ quadratically. For $l \neq l'$ the centering does not contribute, and we obtain
    \begin{align}
    \label{eq:gap_offdiag}
        \langle l |\, Z_S\, | l' \rangle \;=\; 0 \qquad \text{for all } l \neq l' \text{ in } V .
    \end{align}
    For every $l=l'$ case, we obtain
    \begin{align}
        \label{eq:gap_diag}
        \langle l |\, Z_S\, | l \rangle - \langle Z_S \rangle_\beta \;\xrightarrow{\;\beta \to \infty\;}\; 0 , 
    \end{align}
    implying that all states in $V$ have the same $Z_S$ expectation value. 
    Now suppose $\dim V \ge 2$ and let $\ket{a}$ and $\ket{b}$ be two orthonormal eigenstates in $V$. The set of $2^n$ different $Z_S$ operators form a linear basis of the algebra of operators diagonal in the $Z$ basis. 
    Therefore, \cref{eq:gap_offdiag,eq:gap_diag} extend by linearity to any diagonal operator $D$ in the $Z$ basis which will satisfy:
    \begin{align}
        \langle a |\, D\, | b \rangle \;=\; 0 \qquad \text{and} \qquad \langle a |\, D\, | a \rangle \;=\; \langle b |\, D\, | b \rangle .
    \end{align}
    Taking $D = \ketbra{x}{x}$ for any $Z$ basis state $\ket{x}$,
    we get
    \begin{align}
        \braket{a|x} \braket{x|b} \;=\; 0 \qquad \text{and} \qquad \left|\braket{a|x}\right|^2 \;=\; |\braket{b|x}|^2 \qquad \text{for every } x \in \{0,1\}^n .
    \end{align}
    Summing the second identity over $x$ must give $\sum_x |\braket{a|x}|^2 = 1$, so $\braket{a|x} \neq 0$ for some $x_\star$. Then $|\braket{a|x_\star}| = |\braket{b|x_\star}| > 0$, and the product $\braket{a|x_\star}\, \braket{x_\star|b}$ cannot vanish, contradicting the first identity. Hence $\dim V = 1$, which completes the proof.
\end{proof}

\section*{Acknowledgements}
The authors used AI-based tools,
including Claude Opus 4.8, Fable 5, and ChatGPT 5.5, as aids for suggesting proof ideas, obtaining technical inequalities, and improving the presentation. All proofs, calculations, and conclusions were independently verified by
both authors. 
This work was supported 
by the U.S. Department of Energy, Office of Science,
Accelerated Research in Quantum Computing, Fundamental Algorithmic Research toward Quantum Utility
(FAR-Qu) (CR), and the JSPS KAKENHI Grant Numbers JP25K17310, JP25H01391, JP25H01388, JP25K24845, and JP25K24848 (JT).

\bibliographystyle{unsrt}
\bibliography{refs}

\newpage
\appendix
\crefalias{section}{appendix}
\crefalias{subsection}{appendix}

\section{Elementary Trotter bounds}
\label{app:trotter}

In this appendix we prove the size bounds of \cref{lem:size_bounds} and the elementary bounds on the Trotterized objects that are used in the proofs of the Trotter comparison lemmas in \cref{app:trotter_comparison}. These include the one-step error bounds (\cref{lem:one_step}) and the trace-ratio bound (\cref{lem:weyl_ratio}). The proofs of \cref{lem:size_bounds,lem:one_step} rely on the norm bound
\begin{align}
\label{eq:L_norm_bound}
    \|L^{\pm 1}\| \;\le\; \prod_{ij} e^{\frac{\epsilon}{2} \|H_{ij}\|} \;\le\; e^{\frac{3}{2}\epsilon \WW},
\end{align}
which follows from the submultiplicativity of the operator norm and bound on the norm $\| H_{ij} \| \;\le\; 3\, w^z_{ij}$.

\begin{proof}[Proof of \cref{lem:size_bounds}]
    Write $D_m := \prod_{i=1}^n \left(\ketbra{0}{0}_i + y_{i,m} \ketbra{1}{1}_i\right)$ for the diagonal operator inserted at the $m$-th Trotter layer, so that by \cref{eq:Trotter_Z_expansion_1},
    \begin{align}
        \ZZ_\epsilon(\bm{y}) = \mathrm{Tr}\Big[ \prod_{m \in \mathbb{Z}_M} L\, D_m L^\dagger \Big].
    \end{align}
    At $\bm{y} = \bm{0}$, every $D_m$ is the projector $\ketbra{\bm{0}}{\bm{0}}$. Combined with $\ketbra{\bm{0}}{\bm{0}}\, L^\dagger L\, \ketbra{\bm{0}}{\bm{0}} = \| L \ket{\bm{0}} \|^2 \, \ketbra{\bm{0}}{\bm{0}}$, we get
    \begin{align}
        \ZZ_\epsilon(\bm{0}) = v^M \quad \text{where} \quad v := \| L \ket{\bm{0}} \|^2 .
    \end{align}
    
    For the second bound of \cref{eq:size_bounds}, the norm bound \cref{eq:L_norm_bound} gives
    \begin{align}
        v \;\le\; \|L\|^2 \;\le\; e^{3\epsilon\WW},
    \end{align}
    while $1 = \|L^{-1} L \ket{\bm{0}}\| \le \|L^{-1}\|\, \|L\ket{\bm{0}}\|$ gives
    \begin{align}
        v \;\ge\; \|L^{-1}\|^{-2} \;\ge\; e^{-3\epsilon\WW}.
    \end{align}
    Hence
    \begin{align}
        |\log \ZZ_\epsilon(\bm{0})| \;=\; M\, |\log v| \;\le\; 3 M \epsilon \WW \;=\; 3\beta\WW.
    \end{align}

    For the third bound, $|y_{i,m}| \le 1$ implies $\|D_m\| \le 1$. Bounding the trace by the dimension times the operator norm of the product,
    \begin{align}
        |\ZZ_\epsilon(\bm{y})| \;\le\; 2^n \prod_{m \in \mathbb{Z}_M} \|L\|\, \|D_m\|\, \|L^\dagger\| \;\le\; 2^n\, \|L\|^{2M} \;\le\; 2^n e^{3\beta\WW}.
    \end{align}
\end{proof}

\begin{lemma}[One-step Trotter bounds]
\label{lem:one_step}
Let $T = e^{-\epsilon H_h}$ be the exact imaginary-time step, $\mathcal{S} = \mathcal{S}_\epsilon$ the Trotter step of \cref{eq:S_epsilon_def}, and $\widetilde{Z}_S = L\, Z_S\, L^{-1}$ the conjugated insertion of the product $Z_S = \prod_{i \in S} Z_i$, for a nonempty $S \subseteq [n]$. Recall $\Gamma = 3\WW + nh$ from \cref{eq:Gamma_def}. If $\epsilon \Gamma \le \tfrac{1}{2}$, then
\begin{align}
    \| \mathcal{S} - T \| \;\le\; 2\, \epsilon^2 \Gamma^2 \quad \text{and} \quad \big\| \widetilde{Z}_S - Z_S \big\| \;\le\; 5\, \epsilon\, \WW.
\end{align}
\end{lemma}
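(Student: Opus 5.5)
The plan is to prove both bounds by direct Taylor-type estimates on the exponentials, keeping every constant explicit and using only the norm bound \cref{eq:L_norm_bound} together with $\|H_{ij}\|\le 3w^z_{ij}$.

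\emph{Second bound.} Write $\widetilde{Z}_S - Z_S = (L Z_S - Z_S L)L^{-1} = [L, Z_S]\,L^{-1}$. Since $[Z_S, Z_S]=0$, we have $[\mathbb{1},Z_S]=0$, so $[L,Z_S]=[L-\mathbb{1},Z_S]$. Hence $\|\widetilde{Z}_S - Z_S\|\le 2\|L-\mathbb{1}\|\,\|L^{-1}\|$.

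To bound $\|L-\mathbb{1}\|$, telescope the product $L=\prod_{ij} e^{-\frac{\epsilon}{2}H_{ij}}$ one factor at a time. Each factor satisfies $\|e^{-\frac\epsilon2 H_{ij}}-\mathbb{1}\|\le e^{\frac\epsilon2\|H_{ij}\|}-1$, and the partial products are bounded by $e^{\frac32\epsilon\WW}$. This gives $\|L-\mathbb{1}\|\le e^{\frac32\epsilon\WW}-1$. Since $\epsilon\Gamma\le\frac12$ implies $3\epsilon\WW\le\frac12$, we get $e^{x}-1\le x e^{x}$ with $x=\frac32\epsilon\WW\le\frac14$. Combining with $\|L^{-1}\|\le e^{1/4}$ yields $\|\widetilde Z_S-Z_S\|\le 3\epsilon\WW\,e^{1/2}\le 5\epsilon\WW$.

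\emph{First bound.} Write $Q:=h\sum_j Z_j$, so that $H_h = H - Q$ and $\mathcal S = L e^{\epsilon Q}L^\dagger$. Note that $L^\dagger=\prod_{ij}e^{-\frac\epsilon2 H_{ij}}$ in reversed order, because each $H_{ij}$ is Hermitian. Thus $\mathcal S$ is a symmetric product of exponentials of the operators $-\frac\epsilon2 H_{ij}$ (each appearing twice) and $\epsilon Q$, whose exponents sum to $-\epsilon H_h$. Each factor has exponent of norm at most $\frac\epsilon2\|H_{ij}\|$ or $\epsilon nh$, and these norms sum to at most $\epsilon\Gamma$.

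The plan is to expand both $\mathcal S$ and $T$ to second order:
\begin{align}
\mathcal S = \mathbb{1} - \epsilon H_h + R_{\mathcal S},\qquad T=\mathbb{1}-\epsilon H_h + R_T .
\end{align}
For $T$, the standard estimate gives $\|R_T\|\le e^{\epsilon\Gamma}-1-\epsilon\Gamma$.

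For $\mathcal S$, write each factor as $e^{A_r}=\mathbb{1}+A_r+\rho_r$ with $\|\rho_r\|\le e^{a_r}-1-a_r$, where $a_r=\|A_r\|$. Expanding the product, the zeroth- and first-order terms reproduce $\mathbb{1}+\sum_r A_r = \mathbb{1}-\epsilon H_h$. Every remaining term is dominated termwise by the corresponding term in the expansion of $\prod_r e^{a_r}$. This gives $\|R_{\mathcal S}\|\le e^{\sum a_r}-1-\sum a_r\le e^{\epsilon\Gamma}-1-\epsilon\Gamma$.

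Therefore $\|\mathcal S-T\|\le 2(e^{x}-1-x)$ with $x=\epsilon\Gamma\le\frac12$. Using $e^x-1-x\le \frac{x^2}{2}e^{x}\le \frac{x^2}{2}e^{1/2}<x^2$, this gives $\|\mathcal S - T\|\le 2\epsilon^2\Gamma^2$.

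\emph{Main obstacle.} The only delicate step is the termwise domination argument for the second-order remainder of a product of noncommuting exponentials. I would state it as a small claim: for operators $A_1,\dots,A_p$,
\begin{align}
\Big\|\prod_r e^{A_r}-\mathbb{1}-\sum_r A_r\Big\|\le e^{\sum_r\|A_r\|}-1-\sum_r\|A_r\| ,
\end{align}
proved by expanding every exponential into its power series and applying the triangle inequality and submultiplicativity monomial by monomial. The power series are absolutely convergent, so this expansion is justified. The rest is constant bookkeeping.
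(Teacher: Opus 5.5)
Your proof is correct. The first bound follows the paper's argument: write $\mathcal{S}$ as a product of exponentials whose generators sum to $-\epsilon H_h$ and whose norms sum to at most $\epsilon\Gamma$, dominate the second-order remainders of $\mathcal{S}$ and $T$ termwise, and use the triangle inequality.

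For the second bound you take a different and more elementary route. The paper telescopes $[L,Z_S]$ over the bond factors of $L$ and notes that $[H_{ij},Z_S]=0$ unless exactly one of $i,j$ lies in $S$. It then uses the Duhamel formula $[e^{A},Z]=\int_0^1 e^{sA}[A,Z]e^{(1-s)A}\,\mathrm{d}s$ with $\|[H_{ij},Z_i]\|\le 4w^z_{ij}$. This gives $\|[L,Z_S]\|\le 2\epsilon\WW e^{\frac32\epsilon\WW}$ and a final constant $2e^{1/2}\approx 3.3$. You instead use $[L,Z_S]=[L-\I,Z_S]$, which needs no commutator computation and no locality, only $\|H_{ij}\|\le 3w^z_{ij}$. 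The cost is the constant $3e^{1/2}\approx 4.95$, which only just fits under $5$. The paper's version buys slack and a prefactor involving only the bonds crossing the boundary of $S$ (later coarsened to $\WW$). Yours buys brevity and insensitivity to the detailed form of $H_{ij}$.

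Because your constant is so tight, the step $\|L-\I\|\le e^{\frac32\epsilon\WW}-1$ must be justified exactly as stated. Write $L-\I=\sum_r U_1\cdots U_{r-1}(U_r-\I)$ and bound the $r$-th partial product by $e^{\sum_{s<r}b_s}$, where $b_s$ is the norm of the $s$-th exponent. Then $\sum_r e^{\sum_{s<r}b_s}(e^{b_r}-1)=e^{\sum_r b_r}-1$ telescopes exactly; alternatively, apply your power-series domination claim at zeroth order. Your wording suggests bounding every partial product by the full $e^{\frac32\epsilon\WW}$. That only gives $\|L-\I\|\le e^{\frac32\epsilon\WW}\bigl(e^{\frac32\epsilon\WW}-1\bigr)$, which at the edge $3\epsilon\WW=\frac12$ of the allowed range yields only about $5.6\,\epsilon\WW$, exceeding the claimed $5\,\epsilon\WW$.
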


\begin{proof}
    For the first bound, note that the Trotter step $\mathcal{S} = L\, e^{\epsilon h \sum_j Z_j}\, L^\dagger$ is a product of exponentials
    \begin{align}
        \mathcal{S} \;=\; \prod_k e^{-\epsilon A_k},
    \end{align}
    where the generators $A_k$ run over the $H_{ij}/2$ appearing in $L$, then $-h \sum_j Z_j$, then the $H_{ij}/2$ appearing in $L^\dagger$ in the reverse order. They satisfy
    \begin{align}
        \sum_k A_k \;=\; H_h \qquad \text{and} \qquad \sigma := \sum_k \|A_k\| \;\le\; 3\WW + nh \;=\; \Gamma.
    \end{align}
    Expanding every factor as $e^{-\epsilon A_k} = \sum_{m \ge 0} (-\epsilon A_k)^m / m!$ and gathering terms of degree zero and one in $\epsilon$, we get $\I - \epsilon H_h$. The norm of the remaining terms is at most $e^{\epsilon\sigma} - 1 - \epsilon\sigma$. Since $e^x - 1 - x \le \tfrac{1}{2} x^2 e^x$ for $x \ge 0$,
    \begin{align}
        \big\| \mathcal{S} - \I + \epsilon H_h \big\| \;\le\; e^{\epsilon\sigma} - 1 - \epsilon\sigma \;\le\; \frac{(\epsilon\Gamma)^2}{2}\, e^{\epsilon\Gamma},
    \end{align}
    and the same expansion applied to the single exponential $T = e^{-\epsilon H_h}$, with $\|H_h\| \le \sigma \le \Gamma$, bounds $\|T - \I + \epsilon H_h\|$ by the same quantity. The triangle inequality and $\epsilon\Gamma \le \tfrac12$ then give
    \begin{align}
        \|\mathcal{S} - T\| \;\le\; (\epsilon\Gamma)^2\, e^{\epsilon\Gamma} \;\le\; e^{1/2}\, (\epsilon\Gamma)^2 \;\le\; 2\, \epsilon^2 \Gamma^2 .
    \end{align}

    For the second bound, write
    \begin{align}
        \widetilde{Z}_S - Z_S \;=\; [L, Z_S]\, L^{-1}
    \end{align}
    and telescope the commutator over the factors of $L$. Only the bonds with exactly one endpoint in $S$ contribute, so
    \begin{align}
        \big\| [L, Z_S] \big\| \;\le\; \sum_{i \in S,\; j \notin S} \big\| \big[ e^{-\frac{\epsilon}{2} H_{ij}},\, Z_S \big] \big\| \prod_{(kl) \neq (ij)} \big\| e^{-\frac{\epsilon}{2} H_{kl}} \big\| .
    \end{align}
    The integral representation
    \begin{align}
        [e^{A}, Z] \;=\; \int_0^1 e^{sA}\, [A, Z]\, e^{(1-s)A}\, \mathrm{d}s
    \end{align}
    gives $\| [e^{A}, Z] \| \le \| [A, Z] \|\, e^{\|A\|}$, and for a bond with $i \in S$ and $j \notin S$ one can verify that
    \begin{align}
        [H_{ij}, Z_S] \;=\; [H_{ij}, Z_i]\, Z_{S \setminus \{i\}} \qquad \text{and} \qquad \big\| [H_{ij}, Z_i] \big\| \;\le\; 4\, w^z_{ij}.
    \end{align}
    Combining the three bounds with $\sum_{i \in S,\, j \notin S} w^z_{ij} \le \WW$,
    \begin{align}
        \big\| [L, Z_S] \big\| \;\le\; \sum_{i \in S,\; j \notin S} \frac{\epsilon}{2} \cdot 4\, w^z_{ij}\, e^{\frac{\epsilon}{2} \|H_{ij}\|} \prod_{(kl) \neq (ij)} e^{\frac{\epsilon}{2} \|H_{kl}\|} \;\le\; 2\, \epsilon\, \WW\, e^{\frac{3}{2}\epsilon\WW}.
    \end{align}
   Together with $\|L^{-1}\| \le e^{\frac{3}{2}\epsilon\WW}$ from \cref{eq:L_norm_bound} and $3 \epsilon \WW \le \epsilon\Gamma \le \tfrac12$,
    \begin{align}
        \big\| \widetilde{Z}_S - Z_S \big\| \;\le\; \big\| [L, Z_S] \big\|\, \big\| L^{-1} \big\| \;\le\; 2\, \epsilon\, \WW\, e^{3 \epsilon \WW} \;\le\; 2\, e^{1/2}\, \epsilon\, \WW \;\le\; 5\, \epsilon\, \WW .
    \end{align}
\end{proof}

\begin{lemma}[Trace-ratio bound]
\label{lem:weyl_ratio}
Let $P$ and $Q$ be Hermitian operators with $P, Q \succeq b\, \I$ for some $b > 0$ and $\| P - Q \| \le \eta$. Set $\delta := \eta / b$. Then for every integer $N \ge 1$,
\begin{align}
    e^{-N\delta} \;\le\; (1 + \delta)^{-N} \;\le\; \frac{\mathrm{Tr}[P^N]}{\mathrm{Tr}[Q^N]} \;\le\; (1 + \delta)^{N} \;\le\; e^{N\delta} .
\end{align}
\end{lemma}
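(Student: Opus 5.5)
The plan is to reduce the statement to an eigenvalue comparison via Weyl's inequality, and then take the product over eigenvalues in $\mathrm{Tr}[P^N] = \sum_l p_l^N$. By symmetry in $P$ and $Q$ it suffices to prove the upper bound $\mathrm{Tr}[P^N]/\mathrm{Tr}[Q^N] \le (1+\delta)^N$. The lower bound then follows by exchanging the roles of $P$ and $Q$, since the hypotheses are symmetric. The outer inequalities $e^{-N\delta} \le (1+\delta)^{-N}$ and $(1+\delta)^N \le e^{N\delta}$ are immediate from $1+\delta \le e^{\delta}$.

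First, order the eigenvalues $p_1 \ge p_2 \ge \cdots$ of $P$ and $q_1 \ge q_2 \ge \cdots$ of $Q$ decreasingly, both of dimension $D$. Weyl's perturbation inequality for Hermitian matrices gives $|p_l - q_l| \le \|P - Q\| \le \eta$ for every $l$. Since $q_l \ge b > 0$, this yields
\begin{align}
    p_l \;\le\; q_l + \eta \;=\; q_l \Big( 1 + \frac{\eta}{q_l} \Big) \;\le\; q_l\, (1 + \delta).
\end{align}
The last step uses $q_l \ge b$, so that $\eta/q_l \le \eta/b = \delta$.

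Second, since all eigenvalues are positive, raising to the $N$-th power preserves the inequality, giving $p_l^N \le (1+\delta)^N q_l^N$. Summing over $l$ gives $\mathrm{Tr}[P^N] = \sum_l p_l^N \le (1+\delta)^N \sum_l q_l^N = (1+\delta)^N\, \mathrm{Tr}[Q^N]$. Dividing by $\mathrm{Tr}[Q^N] > 0$ gives the upper bound, and swapping $P$ and $Q$ gives the lower bound.

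There is no real obstacle here. The one point requiring care is to use the \emph{multiplicative} form of the comparison, dividing by $q_l \ge b$, rather than an additive one. The additive form would produce an $N$-dependent additive error that does not factor out of the trace. The multiplicative form is exactly what gives the clean exponent $N\delta$ that the Trotter comparison in \cref{app:trotter_comparison} presumably needs.
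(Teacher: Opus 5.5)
Your proposal is correct and follows the paper's proof essentially step for step. Both use Weyl's inequality with $\lambda_k(Q) \ge b$ to get $\lambda_k(P) \le (1+\delta)\lambda_k(Q)$, raise to the $N$-th power using positivity, sum over $k$, exchange $P$ and $Q$ for the lower bound, and finish with $1+\delta \le e^{\delta}$.
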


\begin{proof}
    Let $\lambda_1(P) \ge \lambda_2(P) \ge \cdots$ and $\lambda_1(Q) \ge \lambda_2(Q) \ge \cdots$ be the eigenvalues in decreasing order. Weyl's inequality and $\lambda_k(Q) \ge b$ give
    \begin{align}
        \lambda_k(P) \;\le\; \lambda_k(Q) + \eta \;\le\; (1 + \delta)\, \lambda_k(Q)
    \end{align}
    for every $k$. Raising to the $N$-th power, which is monotone on the positive eigenvalues, and summing over $k$,
    \begin{align}
        \mathrm{Tr}[P^N] \;=\; \sum_k \lambda_k(P)^N \;\le\; (1+\delta)^N \sum_k \lambda_k(Q)^N \;=\; (1+\delta)^N\, \mathrm{Tr}[Q^N].
    \end{align}
    Exchanging the roles of $P$ and $Q$, now using $\lambda_k(P) \ge b$, gives $\mathrm{Tr}[Q^N] \le (1+\delta)^N\, \mathrm{Tr}[P^N]$, which is the lower bound. The outer inequalities follow from $1 + \delta \le e^{\delta}$ for $\delta \ge 0$.
\end{proof}

\section{Trotter error bounds for correlators}
\label{app:trotter_comparison}

In this appendix we prove the Trotter error bounds for correlators used in the main text, \cref{lem:trotter_compare}. Both $f_S$ and $f_{S,\epsilon}$ are built from the one- and two-insertion moments of $Z_S$, which we compare individually in \cref{lem:moment_bounds,lem:moment_compare}.

We repeatedly use the following tracial matrix H\"older inequality.

\begin{fact}[Tracial matrix H\"older inequality]
\label{fact:tracial_holder}
Let $A_0, A_1, \ldots$ be operators and $B_0, B_1, \ldots$ Hermitian positive semidefinite operators on a finite-dimensional Hilbert space, interleaved in a trace with real powers $b_0, b_1, \ldots \ge 0$ of total degree $N = \sum_j b_j > 0$. Then
\begin{align}
\label{eq:schatten_holder}
    \big| \mathrm{Tr}\big[ A_0\, B_0^{b_0}\, A_1\, B_1^{b_1} \cdots \big] \big| \;\le\; \prod_i \| A_i \| \cdot \prod_j \|B_j\|^{b_j}_{N},
\end{align}
with the Schatten norm $\|B\|_N = \left( \mathrm{Tr}\big[ B^N \big] \right)^{1/N}$.
\end{fact}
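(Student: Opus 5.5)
The plan is to reduce the multi-factor case to the classical Hölder inequality for Schatten norms, $|\mathrm{Tr}[X_1 \cdots X_r]| \le \prod_i \|X_i\|_{p_i}$ whenever $\sum_i 1/p_i = 1$ with $p_i \in [1,\infty]$.

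Each factor in the trace is treated as either a bounded operator $A_i$, assigned exponent $p = \infty$ with $\|A_i\|_\infty = \|A_i\|$, or a power $B_j^{b_j}$ with $b_j > 0$. Factors with $b_j = 0$ are just the identity and can be dropped. For a factor $B_j^{b_j}$ we take $p_j := N/b_j$. Since $\sum_j b_j = N$, the reciprocals satisfy $\sum_j 1/p_j = \sum_j b_j/N = 1$, so the exponents are admissible; each $p_j$ lies in $[1,\infty)$ because $b_j \le N$.

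The next step computes the Schatten norms of these factors. Since $B_j \succeq 0$,
\begin{align}
\|B_j^{b_j}\|_{N/b_j} = \big(\mathrm{Tr}[B_j^{N}]\big)^{b_j/N} = \|B_j\|_N^{b_j},
\end{align}
which is exactly the factor appearing on the right-hand side of \cref{eq:schatten_holder}. Substituting these norms into the generalized Hölder inequality then gives \cref{eq:schatten_holder} directly.

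The main step to justify is the generalized multi-factor Hölder inequality itself. I would derive it by induction on the number of factors, starting from the two-factor tracial Hölder inequality $|\mathrm{Tr}[XY]| \le \|X\|_p\|Y\|_q$ and the non-tracial bound $\|XY\|_r \le \|X\|_p \|Y\|_q$ for $1/r = 1/p + 1/q$. Both are standard consequences of the singular-value inequalities of Horn and Weyl, $\sum_{k \le m} s_k(XY) \le \sum_{k \le m} s_k(X)\, s_k(Y)$ in the log-majorization form, and could simply be cited. The factors with $p = \infty$ are absorbed using $\|AX\|_p \le \|A\|\,\|X\|_p$, so they only contribute their operator norms.
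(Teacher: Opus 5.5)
Your proposal is correct and follows the paper's proof: both reduce to the multi-factor Schatten H\"older inequality, with $p=\infty$ for each $A_i$, $p_j = N/b_j$ for each $B_j^{b_j}$ (the $b_j=0$ factors dropped), and the identity $\|B_j^{b_j}\|_{N/b_j} = \|B_j\|_N^{b_j}$. The only difference is that the paper cites the multi-factor H\"older inequality, whereas you also sketch how to derive it from the two-factor versions and Horn--Weyl singular-value inequalities.
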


\begin{proof}
    The H\"older inequality for Schatten norms \cite{baumgartner2011inequality} states that
    \begin{align}
        \big| \mathrm{Tr}[X_1 \cdots X_r] \big| \;\le\; \prod_i \|X_i\|_{p_i} \qquad \text{whenever} \quad \sum_i \frac{1}{p_i} = 1,
    \end{align}
    where $\|X\|_p = \left( \mathrm{Tr}\big[ |X|^p \big] \right)^{1/p}$ and $\|X\|_\infty = \|X\|$ is the operator norm. We apply it to the factors $A_i$ and $B_j^{b_j}$, omitting the factors with $b_j = 0$, with the exponent $p = \infty$ for every $A_i$ and $p_j = N / b_j$ for $B_j^{b_j}$, so that $\sum_j b_j / N = 1$. Since $B_j$ is positive semidefinite,
    \begin{align}
        \big\| B_j^{b_j} \big\|_{N/b_j} \;=\; \big( \mathrm{Tr}\big[ B_j^{N} \big] \big)^{b_j / N} \;=\; \|B_j\|_N^{b_j},
    \end{align}
    which gives the claim.
\end{proof}

Throughout, $T = e^{-\epsilon H_h}$, $\mathcal{S} := \mathcal{S}_\epsilon$, and $\widetilde{Z}_S = L\, Z_S\, L^{-1}$ as in \cref{lem:one_step}, so that $\mathcal{S}^{(S)}_\epsilon = \widetilde{Z}_S\, \mathcal{S}$. Substituting this into \cref{eq:Trotter_two-point_correlation_function} and recalling the definition \cref{eq:fS_def} of $f_S$, both correlators decompose into one- and two-insertion moments,
\begin{align}
\label{eq:correlator_moments}
    f_S(t) = \mathcal{G}_2(t) - \mathcal{G}_1^2 \qquad \text{and} \qquad f_{S,\epsilon}(k) = \mathcal{G}_{2,\epsilon}(k) - \mathcal{G}_{1,\epsilon}^2 ,
\end{align}
where
\begin{align}
\label{eq:moment_def}
    \mathcal{G}_1 := \braket{Z_S}_\beta , \qquad
    \mathcal{G}_2(t) &:= \braket{Z_S(0)\, Z_S(t)}_\beta , \nonumber \\[3pt]
    \mathcal{G}_{1,\epsilon} := \frac{ \mathrm{Tr}\big[ \widetilde{Z}_S\, \mathcal{S}^{M} \big] }{ \mathrm{Tr}[ \mathcal{S}^M ] } , \qquad
    \mathcal{G}_{2,\epsilon}(k) &:= \frac{ \mathrm{Tr}\big[ \widetilde{Z}_S\, \mathcal{S}^{k}\, \widetilde{Z}_S\, \mathcal{S}^{M-k} \big] }{ \mathrm{Tr}[ \mathcal{S}^M ] }.
\end{align}
By the definition of the time-evolved two-point function \cref{eq:two_point_def} and cyclicity of the trace, together with $\mathrm{Tr}[T^M] = \ZZ$, the exact moments at the layer times $t = k\epsilon$ have the matching form
\begin{align}
\label{eq:exact_moment_form}
    \mathcal{G}_1 = \frac{ \mathrm{Tr}\big[ Z_S\, T^{M} \big] }{ \mathrm{Tr}[ T^M ] } , \qquad
    \mathcal{G}_2(k\epsilon) = \frac{ \mathrm{Tr}\big[ Z_S\, T^{k}\, Z_S\, T^{M-k} \big] }{ \mathrm{Tr}[ T^M ] } .
\end{align}

\begin{lemma}[Moment bounds]
\label{lem:moment_bounds}
The exact moments $\mathcal{G}_1$ and $\mathcal{G}_2(t)$, at any time $t \in [0, \beta]$, and the Trotterized moments $\mathcal{G}_{1,\epsilon}$ and $\mathcal{G}_{2,\epsilon}(k)$, at any layer $1 \le k \le M-1$, have modulus at most one.
\end{lemma}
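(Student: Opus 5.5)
The plan is to bound every moment with the tracial H\"older inequality of \cref{fact:tracial_holder}. The one extra input is that each Trotter step $\mathcal{S} = L\, e^{\epsilon h \sum_j Z_j}\, L^\dagger$ is Hermitian positive definite. Similarly, $T = e^{-\epsilon H_h}$ and $e^{-sH_h}$ are positive definite. The denominators are therefore exactly the $N$-th power of a Schatten norm: $\mathrm{Tr}[\mathcal{S}^M] = \|\mathcal{S}\|_M^M$ and $\ZZ = \mathrm{Tr}[e^{-\beta H_h}] = \|e^{-H_h}\|_\beta^\beta$. In particular, both denominators are strictly positive.

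\emph{Exact moments.} Write
\begin{align}
\mathcal{G}_1 = \mathrm{Tr}[Z_S\, e^{-\beta H_h}]/\ZZ, \qquad \mathcal{G}_2(t) = \mathrm{Tr}[Z_S\, e^{-tH_h}\, Z_S\, e^{-(\beta-t)H_h}]/\ZZ,
\end{align}
the second form following from cyclicity of the trace. Apply \cref{fact:tracial_holder} with $B = e^{-H_h}$, powers $b_0 = t$ and $b_1 = \beta - t$ (total degree $N = \beta$), and operators $A_i = Z_S$ with $\|Z_S\| = 1$. This gives $|\mathrm{Tr}[\cdots]| \le \|e^{-H_h}\|_\beta^{\beta} = \ZZ$, hence $|\mathcal{G}_2(t)| \le 1$. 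The fact allows zero powers, so the endpoints $t = 0$ and $t = \beta$ are covered. The same argument with a single insertion gives $|\mathcal{G}_1| \le 1$.

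\emph{Trotterized moments.} The naive approach, using $\widetilde{Z}_S = L Z_S L^{-1}$ with the factor $\mathcal{S}$, fails because $\|\widetilde{Z}_S\|$ can exceed one. Instead, I would return to the original form \cref{eq:Trotter_two-point_correlation_function}, where the insertion is $\mathcal{S}^{(S)}_\epsilon = L\, e^{\epsilon h\sum Z}\, Z_S\, L^\dagger$. The step is to rewrite the whole trace so that the non-Hermitian pieces become norm-one operators sandwiched between positive powers.

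Set $D := e^{\epsilon h \sum_j Z_j}$, which is positive and commutes with $Z_S$, and let $P := D^{1/2} L^\dagger L D^{1/2}$, which is positive definite. Using cyclicity, $\mathrm{Tr}[\mathcal{S}^M] = \mathrm{Tr}[(L D L^\dagger)^M] = \mathrm{Tr}[P^M]$. Likewise each $\mathcal{S}^{(S)}_\epsilon$ factor contributes $D^{1/2} Z_S D^{1/2} = D Z_S$, so
\begin{align}
\mathrm{Tr}[\mathcal{S}^{(S)}_\epsilon\, \mathcal{S}_\epsilon^{k-1}\, \mathcal{S}^{(S)}_\epsilon\, \mathcal{S}_\epsilon^{M-k-1}] = \mathrm{Tr}[Z_S\, P^{k}\, Z_S\, P^{M-k}].
\end{align}
\cref{fact:tracial_holder} with $B = P$ and $N = M$ then gives a bound of $\mathrm{Tr}[P^M]$, i.e.\ modulus at most one; the single-insertion moment is handled the same way. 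The appendix's $\widetilde{Z}_S$ form is equivalent to this, since $\widetilde{Z}_S\, \mathcal{S} = \mathcal{S}^{(S)}_\epsilon$.

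The main obstacle is exactly this regrouping: making sure every insertion lands between positive powers of the single operator $P$. Doing so needs only cyclicity and the commutation $[D, Z_S] = 0$.
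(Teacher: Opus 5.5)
Your proposal is correct and follows the paper's proof essentially step for step. The exact moments are bounded by the tracial H\"older inequality with $B = e^{-H_h}$ at total degree $\beta$. The Trotterized moments are rewritten, via cyclicity and $[D, Z_S] = 0$, as $\mathrm{Tr}[Z_S P^k Z_S P^{M-k}]/\mathrm{Tr}[P^M]$ with $P = D^{1/2} L^\dagger L D^{1/2}$, which is exactly the paper's operator $R$ obtained by conjugating with $D^{-1/2}L^{-1}$; your explicit regrouping through $\mathcal{S}^{(S)}_\epsilon = \widetilde{Z}_S\,\mathcal{S}$ is just a hands-on way of performing that same conjugation.
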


\begin{proof}
    By \cref{eq:two_point_def} and cyclicity of the trace,
    \begin{align}
        \ZZ\, \mathcal{G}_2(t) \;=\; \mathrm{Tr}\big[ Z_S\, e^{-t H_h}\, Z_S\, e^{-(\beta - t) H_h} \big],
    \end{align}
    a trace of the two unit-norm insertions $Z_S$ interleaved with the powers $e^{-t H_h} = \big( e^{-H_h} \big)^{t}$ and $e^{-(\beta - t) H_h}$ of total degree $\beta$. The tracial H\"older inequality \cref{eq:schatten_holder} at total degree $N = \beta$ bounds this trace by
    \begin{align}
        \| Z_S \|^2\, \big\| e^{-H_h} \big\|_\beta^{\, t}\, \big\| e^{-H_h} \big\|_\beta^{\, \beta - t} \;=\; \mathrm{Tr}\big[ e^{-\beta H_h} \big] \;=\; \ZZ,
    \end{align}
    so $| \mathcal{G}_2(t) | \le 1$, and the same argument with a single insertion gives $| \mathcal{G}_1 | \le 1$.

    The insertions $\widetilde{Z}_S$ of the Trotterized moments in \cref{eq:moment_def} need not have unit norm, so we first remove the conjugation. Writing $D := e^{\epsilon h \sum_j Z_j}$, so that $\mathcal{S} = L D L^\dagger$, conjugation by $D^{-1/2} L^{-1}$ leaves the traces in \cref{eq:moment_def} unchanged while sending
    \begin{align}
        \widetilde{Z}_S \;\mapsto\; Z_S \qquad \text{and} \qquad \mathcal{S} \;\mapsto\; R := D^{1/2} L^\dagger L\, D^{1/2},
    \end{align}
    the former because $Z_S$ commutes with $D$. Noe that $R$ is Hermitian and positive definite. Hence
    \begin{align}
        \mathcal{G}_{1,\epsilon} = \frac{ \mathrm{Tr}\big[ Z_S\, R^{M} \big] }{ \mathrm{Tr}[ R^M ] } , \qquad
        \mathcal{G}_{2,\epsilon}(k) = \frac{ \mathrm{Tr}\big[ Z_S\, R^{k}\, Z_S\, R^{M-k} \big] }{ \mathrm{Tr}[ R^M ] } ,
    \end{align}
    which are now in exactly the same form as the exact moments in \cref{eq:exact_moment_form}, with the Hermitian positive-definite operator $R$ in place of $T$. The tracial H\"older inequality \cref{eq:schatten_holder} with total degree $N = M$ therefore applies just as it did for the exact moments and gives $| \mathcal{G}_{1,\epsilon} |,\, | \mathcal{G}_{2,\epsilon}(k) | \le 1$.
\end{proof}

\begin{lemma}[Moment comparison]
\label{lem:moment_compare}
Let $M \ge 2$ be an integer and $\epsilon := \beta/M$ with $\epsilon \Gamma \le \tfrac12$ and $48\, ( \beta \Gamma^2 + \WW )\, \epsilon \le 1$. Then, for every layer $1 \le k \le M-1$,
\begin{align}
\label{eq:moment_compare_bound}
    \big| \mathcal{G}_2(k\epsilon) - \mathcal{G}_{2,\epsilon}(k) \big| ,\;\; \big| \mathcal{G}_1 - \mathcal{G}_{1,\epsilon} \big| \;\le\; 16\, ( \beta \Gamma^2 + \WW )\, \epsilon .
\end{align}
\end{lemma}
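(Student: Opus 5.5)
We compare each moment in two stages: first we replace the conjugated insertions $\widetilde{Z}_S$ by $Z_S$, then we replace the Trotter step $\mathcal{S}$ by $T$, both in the numerators and in the normalization $\mathrm{Tr}[\mathcal{S}^M]$. The comparison is cleanest with the Hermitian positive form $R = D^{1/2} L^\dagger L D^{1/2}$ from the proof of \cref{lem:moment_bounds}. There $\mathcal{G}_{1,\epsilon} = \mathrm{Tr}[Z_S R^M]/\mathrm{Tr}[R^M]$ and $\mathcal{G}_{2,\epsilon}(k) = \mathrm{Tr}[Z_S R^k Z_S R^{M-k}]/\mathrm{Tr}[R^M]$, with unit-norm insertions. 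The task is therefore to show $\|R - T\| = O(\epsilon^2\Gamma^2 + \epsilon\WW)$, or to account separately for the similarity transform. Since $R = (D^{-1/2}L^{-1})\,\mathcal{S}\,(L D^{1/2})$, a direct bound is awkward. Instead I will work with $\mathcal{S}$ itself. By \cref{lem:one_step}, $\mathcal{S}$ is within $2\epsilon^2\Gamma^2$ of $T$ and $\widetilde{Z}_S$ is within $5\epsilon\WW$ of $Z_S$.

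\textbf{Step 1 (insertion error).} Write $\widetilde{Z}_S = Z_S + E$ with $\|E\|\le 5\epsilon\WW$. Expanding the product in each numerator gives the $Z_S$ terms plus at most three terms containing an $E$. I bound each term with \cref{fact:tracial_holder}, using $\mathcal{S}$ as the positive operator. $\mathcal{S} = L D L^\dagger$ is Hermitian positive definite, so the Hölder bound gives $|\mathrm{Tr}[E\,\mathcal{S}^k X\,\mathcal{S}^{M-k}]| \le \|E\|\|X\|\mathrm{Tr}[\mathcal{S}^M]$. Here $\|X\|\le \|\widetilde Z_S\|\le 1+5\epsilon\WW\le 2$. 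After division by $\mathrm{Tr}[\mathcal{S}^M]$, the insertion error is $O(\epsilon\WW)$, which contributes the $\WW$ part of the bound.

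\textbf{Step 2 (step error).} We must now compare $\mathrm{Tr}[Z_S\mathcal{S}^kZ_S\mathcal{S}^{M-k}]/\mathrm{Tr}[\mathcal{S}^M]$ with the same expression in $T$. Telescope $\mathcal{S}^k - T^k = \sum_{j}\mathcal{S}^{j}(\mathcal{S}-T)T^{k-1-j}$. Each resulting trace has one insertion $\mathcal{S}-T$ of norm at most $2\epsilon^2\Gamma^2$, and the remaining $M-1$ factors are split between $\mathcal{S}$ and $T$. \cref{fact:tracial_holder} with total degree $M-1$ bounds it by $2\epsilon^2\Gamma^2\,\|\mathcal{S}\|_{M-1}^{j}\|T\|_{M-1}^{M-1-j}$.

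To compare Schatten norms, I use \cref{lem:weyl_ratio} with $P=\mathcal{S}$, $Q=T$ and $b = e^{-\epsilon\Gamma}$. This $b$ is valid: $T\succeq e^{-\epsilon\Gamma}$, and $\mathcal{S}\succeq \lambda_{\min}(D)\|L^{-1}\|^{-2}\ge e^{-\epsilon nh - 3\epsilon\WW}=e^{-\epsilon\Gamma}$. The lemma gives $\delta = 2\epsilon^2\Gamma^2 e^{\epsilon\Gamma}\le 4\epsilon^2\Gamma^2$, hence $M\delta\le 4\beta\Gamma^2\epsilon\le 1/12$. The traces $\mathrm{Tr}[\mathcal{S}^N]$ and $\mathrm{Tr}[T^N]$ therefore agree up to factors $e^{\pm 1/12}$ for every $N\le M$. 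We also need $\mathrm{Tr}[T^{M-1}]/\mathrm{Tr}[T^M]\le \|T^{-1}\|\le e^{\epsilon\Gamma}$ and its analogue for $\mathcal{S}$. With these, each telescoped term divided by $\mathrm{Tr}[\cdot^M]$ is $O(\epsilon^2\Gamma^2)$. There are $M$ such terms, giving $O(M\epsilon^2\Gamma^2)=O(\beta\Gamma^2\epsilon)$. The normalization ratio $\mathrm{Tr}[\mathcal{S}^M]/\mathrm{Tr}[T^M]$ differs from $1$ by $e^{M\delta}-1 = O(\beta\Gamma^2\epsilon)$. Since the moments are bounded by $1$ (\cref{lem:moment_bounds}), changing the denominator adds only another error of this order.

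\textbf{Main obstacle.} The delicate point is the bookkeeping of the Hölder bound with mixed $\mathcal{S}$ and $T$ factors together with a deficit of one power. The one-power deficit forces the extra $\|T^{-1}\|$ factor. Keeping all constants within $16$ will require using the smallness assumptions $\epsilon\Gamma\le\frac12$ and $48(\beta\Gamma^2+\WW)\epsilon\le 1$ to linearize the exponentials, e.g. $e^{x}-1\le 2x$ for $x\le 1/12$. A numerical check of the constants would add the roughly $2\cdot 5\epsilon\WW$ from Step 1 to the roughly $(4+4)\beta\Gamma^2\epsilon$ from Step 2, which fits under $16(\beta\Gamma^2+\WW)\epsilon$. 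The one-insertion moment follows from the same argument with a single telescoping.
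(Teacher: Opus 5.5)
Your proposal is correct and follows essentially the same route as the paper. You swap $\widetilde{Z}_S \to Z_S$ using the tracial H\"older inequality with $\mathcal{S}$ as the positive weight. You then telescope $\mathcal{S} \to T$ using H\"older at degree $M-1$, the trace-ratio bound with $b = e^{-\epsilon\Gamma}$, and the one-power deficit controlled by $\|\mathcal{S}^{-1}\|, \|T^{-1}\| \le e^{\epsilon\Gamma}$; finally you absorb the change of normalization through $|\mathcal{G}_1|, |\mathcal{G}_2| \le 1$ and $e^{M\delta}-1$. Your constant estimate (roughly $2\zeta$ from the insertions plus $8\beta\Gamma^2\epsilon$ from the steps) fits under $16(\beta\Gamma^2+\WW)\epsilon$, as does the paper's slightly cruder $15\WW\epsilon + 16\beta\Gamma^2\epsilon$.
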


\begin{proof}
    The idea of the proof is that the moments \cref{eq:moment_def,eq:exact_moment_form} are ratios of traces of powers of a single Hermitian positive step, namely $T$ for the exact moments and $\mathcal{S}$ for the Trotterized ones, and we bound their difference by separately controlling two replacements: the conjugation of the inserted $Z_S$ by the Trotter layer $L$, and the replacement of the Trotter step $\mathcal{S}$ by the exact step $T$.

    The replacement of $\mathcal{S}$ by $T$ is controlled by the one-step error
    \begin{align}
        \label{ing:onestep} \eta := \| \mathcal{S} - T \| \;\le\; 2\, \epsilon^2 \Gamma^2,
    \end{align}
    while the replacement of $\widetilde{Z}_S = L\, Z_S\, L^{-1}$ by $Z_S$ is controlled by
    \begin{align}
        \label{ing:commutator} \zeta := \big\| \widetilde{Z}_S - Z_S \big\|  \;\le\; 5\, \epsilon\, \WW,
    \end{align}
    both proven in \cref{lem:one_step} under the assumption $\epsilon \Gamma \le \tfrac12$.

    Before we proceed to prove the claim, let us establish some bounds that we will use. Both $\mathcal{S}$ and $T$ are Hermitian positive products of exponentials of operators whose norms sum to at most $\epsilon \Gamma$, as in the proof of \cref{lem:one_step}. The resulting operator bound $e^{-\epsilon \Gamma}\, \I \preceq \mathcal{S}, T \preceq e^{\epsilon \Gamma}\,\I$ gives $\|\mathcal{S}^{-1}\|, \|T^{-1}\| \le e^{\epsilon \Gamma}$, hence
    \begin{align}
        \label{ing:power}
        \mathrm{Tr}[\mathcal{S}^{M-1}] = \mathrm{Tr}[\mathcal{S}^M \mathcal{S}^{-1}] \le e^{\epsilon \Gamma}\, \mathrm{Tr}[\mathcal{S}^M] \quad \text{and} \quad \mathrm{Tr}[T^{M-1}] \le e^{\epsilon \Gamma}\, \mathrm{Tr}[T^M].
    \end{align}
    We also need the following inequality, given by the trace-ratio bound of \cref{lem:weyl_ratio} applied with $P = \mathcal{S}$, $Q = T$, and $b = e^{-\epsilon \Gamma}$:
    \begin{equation}
        \label{ing:trace} e^{-N \delta} \;\le\; {\displaystyle \frac{ \mathrm{Tr}[\mathcal{S}^N] }{ \mathrm{Tr}[T^N] }} \;\le\; e^{N \delta} \;\; ( 1 \le N \le M ), \quad \text{where} \quad \delta := e^{\epsilon \Gamma}\, \eta.
    \end{equation}

    We bound $\left| \mathcal{G}_{2,\epsilon}(k) - \mathcal{G}_2(k\epsilon) \right|$ by separating the mismatch in the numerators from the mismatch in the denominators, inserting the hybrid ratio $\mathcal{G}_2(k\epsilon)\, \ZZ / \mathrm{Tr}[\mathcal{S}^M]$. The first term below then compares the numerators over the common denominator $\mathrm{Tr}[\mathcal{S}^M]$, while the second isolates the ratio $\ZZ / \mathrm{Tr}[\mathcal{S}^M]$ of partition functions, controlled by \cref{ing:trace} and the bound $| \mathcal{G}_2(k\epsilon) | \le 1$ of \cref{lem:moment_bounds}:
    \begin{align}
        \big| \mathcal{G}_{2,\epsilon}(k) - \mathcal{G}_2(k\epsilon) \big|
        &\;\le\; \left| \mathcal{G}_{2,\epsilon}(k) - \mathcal{G}_2(k\epsilon)\, \frac{ \ZZ }{ \mathrm{Tr}[\mathcal{S}^M] } \right| + \left| \mathcal{G}_2(k\epsilon)\, \frac{ \ZZ }{ \mathrm{Tr}[\mathcal{S}^M] } - \mathcal{G}_2(k\epsilon) \right| \label{eq:ratio_triangle} \\[3pt]
        &\;\le\; \left| \mathcal{G}_{2,\epsilon}(k) - \mathcal{G}_2(k\epsilon)\, \frac{ \ZZ }{ \mathrm{Tr}[\mathcal{S}^M] } \right| + e^{M \delta} - 1 && \text{by \cref{ing:trace}} \label{eq:ratio_denominator}.
    \end{align}
    Using triangle inequality, we can bound the first summand in \cref{eq:ratio_denominator} as
    \begin{align}
        \Big| &\mathcal{G}_{2,\epsilon}(k)\, \mathrm{Tr}[\mathcal{S}^M] - \mathcal{G}_2(k\epsilon)\, \ZZ \Big| \nonumber\\[3pt]
        &=\; \left| \mathrm{Tr}[  \widetilde{Z}_S\ \mathcal{S}^{k} \widetilde{Z}_S \mathcal{S}^{M-k} ] - \mathrm{Tr}[ Z_S\, T^{k}\, Z_S\, T^{M-k} ] \right| \\[3pt]
        &\le\; \left| \mathrm{Tr}[ \widetilde{Z}_S\, \mathcal{S}^{k}\, \widetilde{Z}_S\, \mathcal{S}^{M-k} ] - \mathrm{Tr}[ Z_S\, \mathcal{S}^{k}\, Z_S\, \mathcal{S}^{M-k}\, ] \right| + \left| \mathrm{Tr}[ Z_S\, \mathcal{S}^{k}\, Z_S\, \mathcal{S}^{M-k} ] - \mathrm{Tr}[ Z_S\, T^{k}\, Z_S\, T^{M-k}\, ] \right| \label{eq:ratio_intermediate}.
    \end{align}
    Here the first term exchanges the conjugated insertions $\widetilde{Z}_S$ for $Z_S$, and the second exchanges the Trotter step $\mathcal{S}$ for the exact step $T$. We use \cref{ing:commutator} and the H\"older inequality \cref{eq:schatten_holder} to bound the first summand in \cref{eq:ratio_intermediate} as
    \begin{align}
        \Big| \mathrm{Tr}[ \widetilde{Z}_S\, \mathcal{S}^{k}\, \widetilde{Z}_S\, \mathcal{S}^{M-k} ] &- \mathrm{Tr}[ Z_S\, \mathcal{S}^{k}\, Z_S\, \mathcal{S}^{M-k}\, ] \Big| \nonumber\\[3pt]
        &\le\; \left| \mathrm{Tr}[ \widetilde{Z}_S\, \mathcal{S}^{k}\, (\widetilde{Z}_S - Z_S)\, \mathcal{S}^{M-k} ]\right| +\left| \mathrm{Tr}[ (\widetilde{Z}_S - Z_S)\, \mathcal{S}^{k}\, Z_S\, \mathcal{S}^{M-k}\, ] \right|\\[3pt]
        &\le\; \zeta \big( 1 + \| \widetilde{Z}_S \| \big)\, \mathrm{Tr}[\mathcal{S}^M] \le 3 \zeta\, \mathrm{Tr}[\mathcal{S}^M],
    \end{align}
    where we used $\| \widetilde{Z}_S \| \le 1 + \zeta \le 2$. For the second summand we telescope $\mathcal{S}$ into $T$ one factor at a time across the $M$ powers. Each of the $M$ resulting terms carries a single factor $\mathcal{S} - T$ together with $j$ factors of $T$ and $M-1-j$ factors of $\mathcal{S}$ for some $0 \le j \le M-1$, so \cref{eq:schatten_holder}, \cref{ing:power}, and \cref{ing:trace} give
    \begin{align}
        &\Big| \mathrm{Tr}[ Z_S\, \mathcal{S}^{k}\, Z_S\, \mathcal{S}^{M-k} ] - \mathrm{Tr}[ Z_S\, T^{k}\, Z_S\, T^{M-k}\, ] \Big|\\
        &\le\; \Big| \mathrm{Tr}[ Z_S\, (\mathcal{S}-T)\mathcal{S}^{k-1}\, Z_S\, \mathcal{S}^{M-k} ]\Big| + \Big|\mathrm{Tr}[ Z_S\, T\,\mathcal{S}^{k-1}\, Z_S\, \mathcal{S}^{M-k} ] - \mathrm{Tr}[ Z_S\, T^{k}\, Z_S\, T^{M-k}\, ] \Big|\\[3pt]
        &\le\; \eta \,\mathrm{Tr}[\mathcal{S}^{M-1}] + \Big|\mathrm{Tr}[ Z_S\, T\,\mathcal{S}^{k-1}\, Z_S\, \mathcal{S}^{M-k} ] - \mathrm{Tr}[ Z_S\, T^{k}\, Z_S\, T^{M-k}\, ] \Big| \\[3pt]
        & \vdots \nonumber\\
        &\le\; \sum_{j=0}^{M-1} \eta \,\mathrm{Tr}[T^{M-1}]^{\tfrac{j}{M-1}} \mathrm{Tr}[\mathcal{S}^{M-1}]^{\tfrac{M-1-j}{M-1}}\\[3pt]
        &\le\; \sum_{j=0}^{M-1} \eta \, e^{j\delta}\,e^{\epsilon \Gamma}\, \mathrm{Tr}[\mathcal{S}^M] \;\le\; M\eta\, e^{\epsilon \Gamma} e^{M \delta}\, \mathrm{Tr}[\mathcal{S}^M],
    \end{align}
    where the last step used $\sum_{j=0}^{M-1} e^{j\delta} \le M e^{M\delta}$.

    Combining the two summands of \cref{eq:ratio_denominator} with the two bounds just obtained,
    \begin{align}
        \big| \mathcal{G}_{2,\epsilon}(k) - \mathcal{G}_2(k\epsilon) \big| \;\le\; 3 \zeta + M \eta\, e^{\epsilon \Gamma} e^{M \delta} + \big( e^{M \delta} - 1 \big).
    \end{align}
    The single-insertion moments are compared in the same way. Inserting the factor $\ZZ / \mathrm{Tr}[\mathcal{S}^M]$ and using $\ZZ = \mathrm{Tr}[T^M]$,
    \begin{align}
        \big| \mathcal{G}_{1,\epsilon} - \mathcal{G}_1 \big|
        &\;\le\; \frac{ \Big| \mathrm{Tr}[ (\widetilde{Z}_S - Z_S)\, \mathcal{S}^M ] \Big| + \Big| \mathrm{Tr}[ Z_S\, (\mathcal{S}^M - T^M) ] \Big| }{ \mathrm{Tr}[\mathcal{S}^M] } + \left| \mathcal{G}_1 \right|\, \left| \frac{\ZZ}{\mathrm{Tr}[\mathcal{S}^M]} - 1 \right| \\[4pt]
        &\;\le\; \zeta + M \eta\, e^{\epsilon \Gamma} e^{M \delta} + \big( e^{M \delta} - 1 \big),
    \end{align}
    the first numerator summand by \cref{ing:commutator} and \cref{eq:schatten_holder}, the second by telescoping $\mathcal{S}$ into $T$ across its $M$ factors as above, and the last term by \cref{lem:moment_bounds} and \cref{ing:trace}.

    It remains to insert the numerical bounds. Recall that $M = \beta / \epsilon$. The assumption $\epsilon \Gamma \le \tfrac{1}{2}$ gives $e^{\epsilon \Gamma} \le e^{1/2} \le 2$. By \cref{ing:onestep} and $( \beta \Gamma^2 + \WW )\, \epsilon \le \tfrac{1}{48}$, we have
    \begin{align}
        M \eta \;\le\; 2\, \beta \Gamma^2 \epsilon, \qquad M \delta = e^{\epsilon \Gamma} M \eta \;\le\; 4\, \beta \Gamma^2 \epsilon \;\le\; \tfrac{1}{12} < 1.
    \end{align}
    In particular $e^{M \delta} \le 2$. It follows that $M \eta\, e^{\epsilon \Gamma} e^{M \delta} \le 8\, \beta \Gamma^2 \epsilon$, and that $e^{M \delta} - 1 \le 2 M \delta \le 8\, \beta \Gamma^2 \epsilon$. Together with $\zeta \le 5\, \WW \epsilon$ from \cref{ing:commutator} (so that $3 \zeta \le 15\, \WW \epsilon$), both differences obey the common bound
    \begin{align}
        \big| \mathcal{G}_2(k\epsilon) - \mathcal{G}_{2,\epsilon}(k) \big|,\ \big| \mathcal{G}_1 - \mathcal{G}_{1,\epsilon} \big| \;\le\; 15\, \WW \epsilon + 16\, \beta \Gamma^2 \epsilon \;\le\; 16\, ( \beta \Gamma^2 + \WW )\, \epsilon,
    \end{align}
    which is \cref{eq:moment_compare_bound}.
\end{proof}

\begin{proof}[Proof of \cref{lem:trotter_compare}]
    Factoring the difference of squares in \cref{eq:correlator_moments} and using $| \mathcal{G}_{1,\epsilon} |, | \mathcal{G}_1 | \le 1$ from \cref{lem:moment_bounds}, we get
    \begin{align}
        \left| f_{S,\epsilon}(k) - f_S(k \epsilon) \right| &\;\le\; \big| \mathcal{G}_{2,\epsilon}(k) - \mathcal{G}_2(k\epsilon) \big| + \big| \mathcal{G}_{1,\epsilon} + \mathcal{G}_1 \big|\, \big| \mathcal{G}_{1,\epsilon} - \mathcal{G}_1 \big| \\[5pt]
        &\;\le\; \big| \mathcal{G}_{2,\epsilon}(k) - \mathcal{G}_2(k\epsilon) \big| + 2 \big| \mathcal{G}_{1,\epsilon} - \mathcal{G}_1 \big| ,
    \end{align}
    and \cref{lem:moment_compare} bounds each difference by $b := 16\, ( \beta \Gamma^2 + \WW )\, \epsilon$. Hence, by the difference-of-squares bound,
    \begin{align}
        \big| f_{S,\epsilon}(k) - f_S(k \epsilon) \big| \;\le\; b + 2 b = 3 b \;=\; 48\, ( \beta \Gamma^2 + \WW )\, \epsilon = C_T\, \epsilon,
    \end{align}
    which is the claim.
\end{proof}

\end{document}